\newtheorem{theorem}{Theorem}
\newtheorem{lemma}{Lemma}
\newtheorem{assumption}{Assumption}
\begin{document}

\title{FedOC: Multi-Server FL with Overlapping Client Relays in Wireless Edge Networks}

\author{Yun Ji, Zeyu Chen, Xiaoxiong Zhong,~\IEEEmembership{Member,~IEEE}, Yanan Ma, Sheng Zhang, and \\Yuguang Fang,~\IEEEmembership{Fellow,~IEEE} \thanks{Yun Ji, zeyu Chen and Sheng Zhang are with the Key Laboratory of Advanced Sensor and Integrated System, the Tsinghua Shenzhen International Graduate School, Tsinghua University, Shenzhen 518055, China. (e-mails: jiyunthu@gmail.com; zy-chen23@mails.tsinghua.edu.cn, zhang\_sh@mail.tsinghua.edu.cn).}\thanks{Xiaoxiong Zhong is with the Department of New Networks, Peng Cheng Laboratory, Shenzhen 518000, China, (e-mail: xixzhong@gmail.com).}\thanks{Yanan Ma and Yuguang Fang are with the Hong Kong JC STEM Lab of Smart City and the Department of Computer Science, City University of Hong Kong, Kowloon, Hong Kong. (e-mail: yananma8-c@my.cityu.edu.hk; my.Fang@cityu.edu.hk). \textit{(Corresponding authors: Sheng Zhang and Xiaoxiong Zhong)}}}

\maketitle

\begin{abstract}
Multi-server Federated Learning (FL) has emerged as a promising solution to mitigate communication bottlenecks of single-server FL. We focus on a typical multi-server FL architecture, where the regions covered by different edge servers (ESs) may overlap. A key observation of this architecture is that clients located in the overlapping areas can access edge models from multiple ESs. Building on this insight, we propose FedOC (Federated learning with Overlapping Clients), a novel framework designed to fully exploit the potential of these overlapping clients. In FedOC, overlapping clients could serve dual roles: (1) as Relay Overlapping Clients (ROCs), they forward edge models between neighboring ESs in real time to facilitate model sharing among different ESs; and (2) as Normal Overlapping Clients (NOCs), they dynamically select their initial model for local training based on the edge model delivery time, which enables indirect data fusion among different regions of ESs. The overall FedOC workflow proceeds as follows: in every round, each client trains local model based on the earliest received edge model and transmits to the respective ESs for model aggregation. Then each ES transmits the aggregated edge model to neighboring ESs through ROC relaying. Upon receiving the relayed models, each ES performs a second aggregation and subsequently broadcasts the updated model to covered clients. The existence of ROCs enables the model of each ES to be disseminated to the other ESs in a decentralized manner, which indirectly achieves inter-cell model and speeding up the training process, making it well-suited for latency-sensitive edge environments. Extensive experimental results show remarkable performance gains of our scheme compared to existing methods.

\end{abstract}

\begin{IEEEkeywords}
Multi-server federated learning, Edge computing, Overlapping clients.
\end{IEEEkeywords}

\section{Introduction}

\IEEEPARstart{O}{ver} the last decade, Machine Learning (ML) has driven significant advances in data-driven applications such as personalized medicine, autonomous driving, and urban-scale sensing~\cite{a1,a2}. These achievements have been largely fueled by the abundance of data available for model training. However, privacy concerns often limit centralized data collection, particularly when sensitive information such as personal images or medical records is involved. Federated Learning (FL) \cite{a3} has emerged to address this challenge by enabling a shared model to be trained across data distributed on multiple clients, thereby achieving scalable learning while preserving user data privacy. 

A typical FL system operates in rounds, where clients perform local training based on their own datasets and upload model updates to a cloud server for aggregation and broadcast. Through repeated iterations, a global model is gradually refined~\cite{a3}. Since FL is trained on multiple clients with distinct data distributions and varying computation resources, the straggler problem and data heterogeneity are two critical challenges~\cite{a10,a11}, and have attracted substantial research attention \cite{a4,a5,a6}. Although promising progresses have been made, an inherent limitation of this single-server architecture is the long-range communication latency between the client and the cloud server (CS) in large-scale FL systems. With the growing demand for latency-sensitive applications such as autonomous driving and UAV-based disaster response, FL architectures incorporating multiple ESs have been proposed to alleviate communication delays between clients and the CS.

A typical multi-server FL scheme is Hierarchical FL (HFL) ~\cite{a7,a8,a12,a14,a15,a16}, where ESs are responsible for aggregating local model updates from their associated clients and subsequently uploading the aggregated model to the CS. However, frequent communications between ESs and the CS can still lead to high latency and resource consumption~\cite{a13}. Another widely adopted multi-server FL mechanism is Clustered Federated Learning (CFL)~\cite{a36,a37,a38,a39}, where clients are dynamically grouped into clusters based on data or model similarity, and each cluster trains a separate model tailored to its members. However, CFL often requires repeated re-clustering across communication rounds to adapt to client dynamics, which can incur significant communication and computation overhead. 

In addition, most existing multi-server FL methods are based on the assumption of disjoint edge server coverage, which may not hold in some realistic deployments of 5G and beyond systems, where a client may fall within the overlapping coverage of multiple ESs~\cite{a17}. In this paper, we consider such practical scenarios, where the overlapping coverage of edge servers gives rise to a new category of clients—overlapping clients (OCs). Each OC can communicate with multiple ESs within the same FL round. We have the following key observations about the potential roles of OCs. From a communication perspective, OCs can act as relays by forwarding models downloaded from one ES to other reachable ESs during the FL process. From a computational perspective, OCs can dynamically select the ES model as the starting point for local training according to the reception time of the edge models, which can mitigate the data heterogeneity across different cells. In essence, OCs have the potential to facilitate distributed model fusion across cells without requiring frequent CS aggregations, thereby reducing overall communication latency.

To our best knowledge, utilizing OCs to accelerate FL training has seldom been studied. The most closely related works are \cite{a17} and \cite{a18}. In \cite{a17}, Han \textit{et al.} proposed FedMES, a distributed FL scheme that allows OCs to train the local models based on the average of received edge models from multiple ESs. However, due to heterogeneous client conditions across edge regions, OCs must wait for all relevant ESs to deliver their models before training, leading to considerable delay when inter-edge heterogeneity is pronounced. Moreover, when the number of OCs is limited, their contribution to global model convergence is marginal. In \cite{a18}, OCs store all received models from multiple ESs and upload them together after completing local training in the next round. However, this approach introduces stale model sharing and increases the storage burden on clients. In contrast, our framework enables real-time model relaying between ESs and incorporates latency-aware strategies for OCs to adaptively participate training. In summary, we highlight our main contributions as follows:
\begin{itemize}
    \item We propose FedOC, a novel multi-server FL framework that leverages OCs to facilitate cross-server collaboration. In FedOC, OCs can act as relay nodes that forward edge models between adjacent cells upon reception, thereby enabling distributed aggregation across all ESs. To reduce relay communication overhead, we assign a single relay OC per overlapping region and merge model relaying with local update uploading into a single transmission. We also propose a lightweight scheduling scheme that uses orthogonal frequency resources to avoid inter-cell interference.
    
    \item To alleviate inter-cell data heterogeneity and reduce training latency, we develop latency-aware training strategies for OCs, where each OC selects the earliest-received edge model as training start in each training round. For comparison, we also define a fixed selection strategy for OCs, where each overlapping client consistently uses the model from a designated home ES, behaving as a regular local client of that server.

    \item We provide both theoretical convergence analysis and extensive empirical evaluation of FedOC under heterogeneous data settings. Results show that FedOC achieves faster convergence and higher training efficiency compared to existing benchmarks.
\end{itemize}
The remainder of this paper is organized as follows. Section II introduces the related work. Section III presents the system architecture and federated learning training process. Section IV provides theoretical results on the convergence of FedOC. Section V evaluates the performance of the proposed framework with simulations. We conclude the paper in Section VI.

\section{Related Works}
Since the proposal of FL \cite{a3}, significant research efforts have been devoted to exploring its implementation over wireless networks. To reduce the training time to achieve a desired learning accuracy, many works have focused on optimizing client selection and radio resource management. For instance, joint client selection and bandwidth allocation policies were introduced in \cite{a5, a6, a19, a20,a21} to minimize the learning loss or maximize the accuracy within a latency and power constraint. However, all of these works assume a classic single-server FL architecture. While such a one-hop architecture simplifies system design, it suffers from long-range communication costs in large systems~\cite{a22,a23}. 


To reduce the communication delay, multi-server FL architectures have been proposed. A prominent example is hierarchical federated learning (HFL)~\cite{a7,a14,a31}, which introduces multiple servers to collect the local models of clients within covered cells. In HFL, ESs first aggregate updates from their local clients, and then send the aggregated result to the CS for global aggregation and broadcast. In \cite{a14}, Lim et al. design dynamic resource allocation and incentive mechanisms for HFL. The hierarchical design enables more flexible communication strategies for coordinating model exchanges among clients, ESs, and the CS, which enhances both training efficiency and system scalability. In \cite{a31}, Chen et al. develop a multi-hop in-network aggregation method that averages models across intermediate nodes on the path from ESs to the CS. \cite{a32,a33} integrate centralized and decentralized FL paradigms by allowing clients within each cluster to exchange models via device-to-device (D2D) communication, while a designated cluster head uploads the aggregated model to the central server for global aggregation. However, the reliance on frequent ES-to-cloud communication and the impact of stragglers in heterogeneous environments can still cause high latency in HFL systems. As an alternative, Clustered Federated Learning (CFL)~\cite{a36,a37,a38,a39} has emerged as another prominent multi-server FL paradigm, where clients are grouped into clusters based on data or model similarity, and each cluster trains a separate model tailored to its members. In \cite{a36}, Sattler et al. propose a model-agnostic distributed multitask FL framework that clusters clients adaptively during training to enhance model relevance under NON-IID data distributions. While CFL enhances personalization and reduces inter-cluster variance, it often requires repeated clustering steps across communication rounds to adapt to client dynamics, leading to additional communication and computation overhead.

In \cite{a17}, a new FL architecture utilizing multiple servers is studied, which exploits the realistic deployment in 5G-and-beyond networks where the coverage areas of neighboring edge servers may overlap and some clients are shared between cells. The key idea is to let OCs train thier local models based on the average of received multiple edge models. Building upon this framework, \cite{a40} further proposes a two-sided learning rates algorithm and extends the theoretical analysis to non-convex loss functions. However, these algorithms incur additional delay, as OCs must wait for multiple model broadcasts before proceeding with training. Moreover, when the number of OCs is limited, their impact on overall model performance diminishes. 

\section{System Model}
\subsection{Federated Learning Basics}
A classical single-server FL system consists of a CS and \(K\) clients indexed by the set \(\mathcal{K} = \{1, 2,  \dots, K\}\), where each client \(k \in \mathcal{K}\) holds a local dataset \(\mathcal{D}_k\) of size \(n^{(k)}\). FL aims to solve the following empirical risk minimization problem:
\begin{align}
\label{equ1}
\min_{\bm{w}}  \ell (\bm{w}) = \min_{\bm{w}} \sum_{k=1}^{K} \frac{n^{(k)}}{N} \ell_{k}({\bm{w}}),
\end{align}
where \(\ell_{k}({\bm{w}}) \triangleq \mathbb{E}_{\xi \sim \mathcal{D}_k} \left[ \ell(\bm{w}, \xi) \right]\) is the local loss function. To minimize the global loss function \(\ell (\bm{w})\), the classical FL algorithm FedAvg proceeds in iterative communication rounds~\cite{a3}. In each round, clients first download the current global model from the server, then perform several local update steps (typically via stochastic gradient descent) on their local data. After local training, clients upload their updated models to the server. The server then aggregates the received updates by computing a weighted average of the local models to form a new global model:

\begin{align} 
\label{equ2}
\boldsymbol{w} = \frac{\sum_{k=1}^{K} n^{(k)} \boldsymbol{w}^{(k)}}{N},
\end{align}
where \(\boldsymbol{w}^{(k)}\) is the locally updated model of client \(k\). 

\subsection{Network Architecture: Multiple ESs With Overlapping Areas}
A single-server FL system may suffer from significant communication delays when some clients are located far from the server, making it less suitable for large-scale network deployments. In this paper, we consider a multi-server FL network motivated by realistic 5G-and-beyond scenarios, where the increasingly dense deployment of ESs result in individual users being concurrently covered by multiple ESs with reliable communication links. The system consists of a CS and a set of \(L\) ESs indexed by \(\mathcal{L} = \{1,2,\dots, L\}\), where each ES serves a group of clients within its coverage area, and some clients fall within the overlapping regions of multiple ESs. We call the local coverage of each edge server a \textit{cell}. We classify clients into two types based on their location: local clients (LCs), which are located in non-overlapping regions and communicate with only one ES, and OCs, which reside in overlapping regions and can communicate with multiple ESs.

\begin{figure}[t]
  \centering
  \includegraphics[width=1\linewidth]{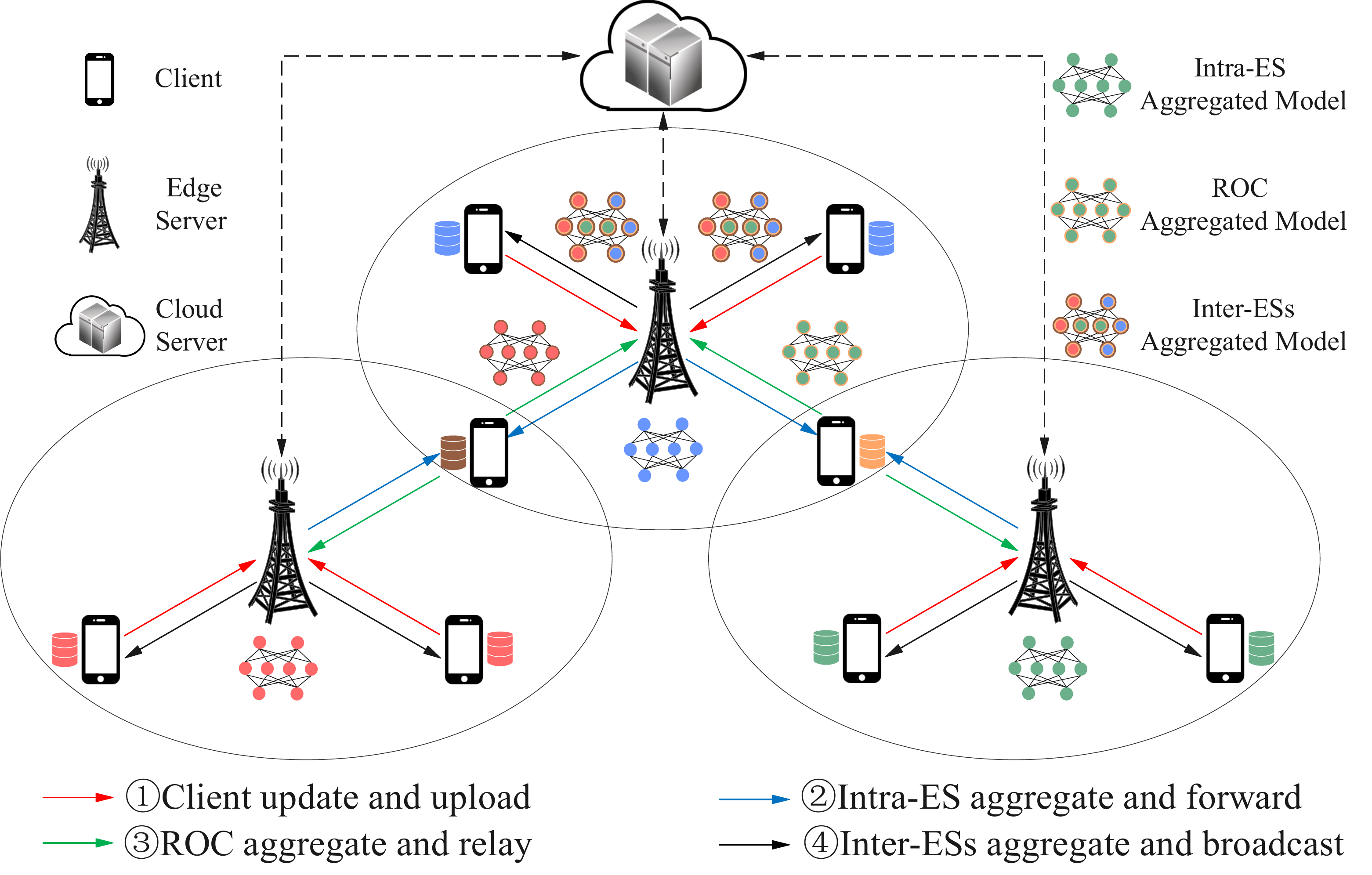}
  \caption{Illustration of a hierarchical FL network with overlapping regions.}
  \label{fig:system_model}
\end{figure}

For tractability, we assume each overlapping region involves exactly two adjacent cells and no single region is covered by more than two ESs. As illustrated in~Fig.~\ref{fig:system_model}, we adopt a chain topology to model the logical overlap between ESs, where each ES \(l\) overlaps with its two immediate neighbors ES \(l - 1\) and \(l + 1\), except for the boundary ESs, i.e., ES \(1\) and ESs \(L\), which only overlap with one adjacent ES. While not necessarily implying a physical line, this structure captures typical cases of localized overlaps, e.g., due to limited signal range or linear deployment patterns like roads or urban blocks. It simplifies analysis while preserving the key challenges posed by overlapping coverage and multi-server coordination. For generality, we do not assume overlap between the first and last ESs unless specified, since a full-loop overlap is not always present in realistic scenarios. However, our conclusions can be easily extended to loop topologies. We now formalize the client classification and related notation. Let \(\mathcal{U}_l\) denote the set of indices of LCs in cell \(l\), with cardinality \(|\mathcal{U}_l| = U_l\). Let \(\mathcal{V}_{l,l+1}\) denote the set of OCs located between cells \(l\) and \(l+1\) with the size of \(V_{l,l+1} = |\mathcal{V}_{l,l+1}|\). In each overlapping region \(\mathcal{V}_{l,l+1}\), we designate a single client as the Relay Overlapping Client (ROC), denoted by \(b_{l,l+1}\), responsible for forwarding models between the two ESs. The remaining OCs are referred to as Normal OCs (NOCs), and we define their set as \(\mathcal{N}_{l,l+1}\), such that \(\mathcal{V}_{l,l+1} = \mathcal{N}_{l,l+1} \cup \{b_{l,l+1}\}\). For readers’ convenience, the frequently used notations are summarized in Table~\ref{tab:table0}.


We now formalize the client classification and related notation. Let \(\mathcal{U}_l\) denote the set of indices of LCs in cell \(l\), with cardinality \(|\mathcal{U}_l| = U_l\). Let \(\mathcal{V}_{l,l+1}\) denote the set of OCs located between cells \(l\) and \(l+1\) where \(V_{l,l+1} = |\mathcal{V}_{l,l+1}|\). In each overlapping region \(\mathcal{V}_{l,l+1}\), we designate a single client as the Relay Overlapping Client (ROC), denoted by \(b_{l,l+1}\), responsible for forwarding models between the two ESs. The remaining OCs are referred to as Normal OCs (NOCs), and we define their set as \(\mathcal{N}_{l,l+1}\), such that \(\mathcal{V}_{l,l+1} = \mathcal{N}_{l,l+1} \cup \{b_{l,l+1}\}\). For readers’ convenience, the frequently used notations are summarized in Table~\ref{tab:table0}.

\begin{table}[t]
\centering
\caption{Frequently Used Nomenclature and Notations}  
\label{tab:table0}
\begin{tabularx}{\linewidth}{l|X}
\hline
\textbf{Notation} & \textbf{Description} \\
\hline
\(n^{(k)}\) & The number of data samples of client \(k\) \\
\hline
\( \mathcal{D}^{(k)} \) & The dataset of client \(k\) \\
\hline
\( \mathcal{U}_{l} \) & The set of local clients in cell \(l\)\\
\hline
\( \mathcal{K}_{l} \) & The number of local clients in cell \(l\)\\
\hline
\( \mathcal{V}_{l,l+1} \) & The set of clients in the overlapped area between cell \(l\) and cell \(l+1\) \\
\hline
\( \mathcal{N}_{l,l+1} \) & The set of NOCs in the overlapped area between cell \(l\) and cell \(l+1\) \\
\hline
\( b_{l,l+1} \) & The relay overlapping client (ROC) in the overlapped area between cell \(l\) and \(l+1\) \\
\hline
\(R\) & The total number of rounds \\
\hline
\(\alpha_{r,l}^{(k)}\) & A binary indicator for whether client \( k \) in \( \mathcal{V}_{l,l+1} \) trains and uploads to cell \( l \) in round \( r \) \\
\hline
\(\mathcal{S}_{l}\) & The set of clients transmitting local models to ES \(l\) \\
\hline
\(\tilde{N}_{r}^{(f_l)}\) & The total number of data volume clients in \(\mathcal{S}_{l}\) \\
\hline
\(\bm{\tilde{w}}_{r,E}^{(f_l)}\) & The initial cell model of cell \( l \) calculated by averaging the models of clients in \( \mathcal{S}_l \) \\
\hline
\( \bm{w}_{r,E}^{(l+1, l)} \) & The model aggregated from cell \(l+1\) and the relay overlapping client (ROC) \(b_{l+1,l}\), which is forwarded from ES \(l+1\) to ES \(l\) \\
\hline
\( \bm{w}_{r+1}^{(f_l)} \) & The edge model of cell \( l \) for broadcasting to clients at the start of round \( r+1 \) \\
\hline
\end{tabularx}
\end{table}

\section{Proposed FedOC Algorithm}
\subsection{Algorithm Description}
Now we describe our FL algorithm tailored to the above setup. As shown in Fig.~\ref{fig:system_model}, the training process can be summarized as five stages, i.e., 1) local model update and upload. 2) intra-ES model aggregation and forwarding to ROCs. 3) ROC aggregation and relay to neighboring ESs. 4) Inter-ES aggregation and broadcasting. 5) Periodic cloud aggregation.

At the beginning of each round, clients download the edge models from ESs and update their local models, with operations varying depending on client roles: 

For each LC \(k\), local training is performed for \(E\) iterations using stochastic gradient descent (SGD) on the received edge model. The update in iteration \(e\) is computed as

\begin{align} 
\label{equ3}
\bm{w}_{r,e+1}^{(k)} = \bm{w}_{r,e}^{(k)} - \eta_{r,e} \nabla \ell_k (\bm{w}_{r,e}^{(k)}),
\end{align}
where \(e = 0, 1, \ldots, E-1\), and \( \nabla \ell_k (\bm{w}_{r,e}^{(k)})\) is the gradient of the local loss function. After local updating, each LC \(k\) transmits the local model \(\bm{w}_{r,E}^{(k)}\) to corresponding edge server. 

For OCs, an overlapping client \(k \in \mathcal{V}_{l,l+1}\) could receive two models: \(\bm{w}_r^{(f_l)}\) from ES \(l\) and \(\bm{w}_r^{(f_{l+1})}\) from ES \(l+1\). Considering the computation and communication heterogeneity of clients across different ESs, the model broadcast times of ESs may differ significantly. Therefore, requiring each overlapping client (OC) to wait for all edge models before starting local training may not meet real-time requirements. To handle this, we propose a \textit{Fastest Selection Strategy}, in which each OC dynamically adopts the earliest-arriving model from the available ESs as the initialization for local training. The OC’s initial training model in round \(r\) can be written as

\begin{align}
\label{equ4}
\bm{w}_{r}^{(k)} &= \alpha_{r,l}^{(k)}\bm{w}_r^{(f_{l})} + \alpha_{r,l+1}^{(k)}\bm{w}_r^{(f_{l+1})}, 
\end{align}
where \(\alpha_{r,l}^{(k)}, \alpha_{r,l+1}^{(k)} \in \{0, 1\}\) indicate whether the initial training model \(\bm{w}_{r}^{(k)}\) is selected from ES \(l\) or ES \(l+1\), and we have \(\quad \alpha_{r,l}^{(k)} + \alpha_{r,l+1}^{(k)} = 1\). Then each OC \(k\) executes SGD updates as LCs. Finally, each NOC transmits \(\bm{w}_{r,E}^{(k)}\) to ES \(i\) when \(\alpha_{r,i}^{(k)} = 1\) holds, ROCs keep their updated models locally and upload them during the subsequent phase. 
This is intentional: the ROC could forward its own local model alongside the relayed ES model during the next stage, thereby reducing redundant transmissions.

In the ES local aggregation and forwarding stage, each ES \(l\) receives local models from clients set \(\mathcal{S}_{r}^{(l)}\) located at its cell, where \(\mathcal{S}_r^{(l)}\) consists of LCs and the NOCs that upload local models, which can be written as \(
\mathcal{S}_r^{(l)} = \mathcal{U}_{l} \cup \{k \mid \alpha_{r,l}^{k} = 1, \; k \in (\mathcal{N}_{l-1,l} \cup \mathcal{N}_{l,l+1})\}\). First, each ES \(l\) aggregates the received local models to obtain the initial cell model \(\bm{\tilde{w}}_{r}^{(f_l)}\) according to the following equation:

\begin{align} 
\label{equ5}
\bm{\tilde{w}}_{r,E}^{(f_l)} = \frac{\sum_{k \in \mathcal{S}_r^{(l)}} n^{(k)} \boldsymbol{w}_{r,E}^{(k)}}{\tilde{N}_{r}^{(f_l)}},
\end{align}
where \(\tilde{N}_{r}^{(f_l)} = \sum_{k \in \mathcal{S}_r^{(l)}} n^{(k)}\). Then, each ES \(l\) will transmit its aggregated model to neighboring ESs through ROCs.

In the ROC aggregation and relay stage, the exchange between neighboring edge servers occurs through the ROCs, and the relay rule is: once ROC \( b_{l,l+1} \) receives an edge model from either ES \( l \) or ES \( l+1 \), it will first aggregate this model with its own local model \(\bm{w}_{r,E}^{(b_{l,l+1})}\) trained during the model update stage, and then forward it to the other ES. Thus, the model that ROC \( b_{l, l+1} \) transmits to ES \( l \) can be written as

\begin{align} 
\label{equ6}
\bm{w}_{r,E}^{(l+1, l)} = \frac{\tilde{N}_{r}^{(f_{l+1})} \bm{\tilde{w}}_{r,E}^{(f_{l+1})} + n^{(b_{l,l+1})} \bm{w}_{r,E}^{(b_{l,l+1})}}{N_{r}^{(l+1,l)}},
\end{align}
where \(N_{r}^{(l+1,l)} = \tilde{N}_{r}^{(f_{l+1})} + n^{(b_{l,l+1})}\). 

For each ES \(l\), after receiving the models and corresponding data volume from ROCs \(b_{l,l+1} \text{ and } b_{l-1,l}\), ES \(l\) will aggregate these models with its cell models to update edge model \(\bm{w}_{r+1}^{(f_l)}\) for broadcast in \(r+1\)-th iteration, which can be written as 

\begin{align} 
\label{equ7}
\bm {w}_{r+1}^{(f_l)}
= \frac{N_{r}^{(l+1,l)} \bm{w}_{r,E}^{(l+1, l)} + N_{r}^{(l-1,l)} \bm{w}_{r,E}^{(l-1, l)} + \tilde{N}_{r}^{(f_l)} \bm{\tilde{w}}_{r,E}^{(f_l)}}{N_{r}^{(l+1,l)} + N_{r}^{(l-1,l)} + \tilde{N}_{r}^{(f_l)}}.
\end{align}
For boundary cells, the missing neighbor’s model is taken as zero. so \eqref{equ7} holds for all \(l\in\{1,\ldots,L\}\).

After every \(\kappa\) rounds, all clients transmit their local models to the corresponding ESs, which then upload the aggregated cell models to the CS for further aggregation and broadcasting back to the ESs. These processes will repeat \(R\) rounds.

\subsection{Latency Analysis}

Since neighboring cells overlap, inter-cell interference cannot be ignored. Assume that a total spectrum budget of $B$ is reserved for model transmission. To eliminate interference, adjacent cells operate on disjoint sub-bands, so each cell is allocated $B/2$. Within a cell, each ES \(l\) first broadcasts the edge model to clients and then allocates its $B/2$ bandwidth among the clients for uploading local models. The details of this intra-cell scheduling are beyond the scope of this study, and we simply assume equal allocation among clients. We denote \(t_{\text{cast}}^{(l)},~t_{\text{comp}}^{(l)},~t_{\text{upload}}^{(l)}\) 
as the times for edge broadcasting, maximal local model training, and maximal local model uploading in cell \(l\), respectively. And we have
\begin{align}
\label{equ8}
t_{\text{upload}}^{(l)} = \frac{M}{\frac{B}{2|S_{l}|} \log (1 + \frac{\min_{k \in \mathcal{S}_l} \{ \delta_{k} \} p}{B/(2|S_{l}|) \cdot N_0})}
\end{align}
where \(M\) is the model size, \(\delta_{l,k}\) is the channel gain of client \(k \in \mathcal{S}_l\) to ES \(l\), \(p\) is the transmit power of client, and \(N_0\) is the noise spectral density.

We denote \(t_{\text{edge}}^{(l)} = t_{\text{cast}}^{(l)}+t_{\text{comp}}^{(l)}+t_{\text{upload}}^{(l)}\) as the inner edge latency of ES \(l\). After aggregating the local uploading models, the ES reclaims the same $B/2$ bandwidth and forwards the aggregated model to its neighboring ESs through relay nodes ROCs. If the ES \(l\) overlaps with two neighbors, every ROC relay link will using \(B/4\) bandwidth, the communication latency can be written as
\begin{align}
\label{equ9}
t_{\text{relay}}^{(l,l+1)}
&= \frac{M}{\tfrac{B}{4}\left[
  \log\!\left(1+\tfrac{4\delta_{l,l+1}P}{BN_0}\right)
 +\log\!\left(1+\tfrac{4\delta_{l,l+1}p}{BN_0}\right)
 \right]} .
\end{align}

where \(\delta_{l,l+1}\) is the channel gain of the link from ES \(l\) to ES \(l+1\) via ROC \(b_{l-1,l}\), and \(P\) is the ES' transmit power. The relay latency of ES \(l\) should be \(t_{\text{relay}}^{(l)} = \max\{t_{\text{relay}}^{(l,l+1)},t_{\text{relay}}^{(l,l-1)}\}\), where \(t_{\text{relay}}^{(1,0)} = t_{\text{relay}}^{(L,L+1)}=0\). Compared with conventional HFL, the only additional step is this brief ES-ROC-ES hops. Compared with \eqref{equ8} and \eqref{equ9}, since usually \(|S^{(l)} \gg  2|\), a bandwidth large enough for all client uploads now carries at most two links, the resulting delay is negligible. After every \(\kappa\) rounds, the cloud aggregation happens, and we assume the maximal latency from all clients to the CS as \(t_{\text{cloud}}\).

\section{THEORETICAL RESULTS}
We consider three cells, and a \(C\) class classification problem defined over a compact space \(\mathcal{X}\) and a label space \(\mathcal{Y} = \{1, 2, \ldots, C\}\). Data point \((\bm{x},y)\) is distributed across \(\mathcal{X} \times \mathcal{Y}\) following the probability distribution \(P\). Inspired by \cite{a34}, we define the population loss \(\ell (\bm{w})\) with the widely used cross-entropy loss as \(\ell(\boldsymbol{w})=\mathbb{E}_{\boldsymbol{x}, y \sim P}\left[\sum_{i=1}^C \mathbbm{1}_{y=i} \log f_i(\boldsymbol{x}, \boldsymbol{w})\right]=\sum_{i=1}^C P_{y=i} \mathbb{E}_{\boldsymbol{x} \mid y=i}\left[\log f_i(\boldsymbol{x}, \boldsymbol{w})\right]\), where \(P_{y=i}\) represents the proportion of data belonging to class \(i\) relative to the total data set. Thus the local update in \eqref{equ3} can be written as

\begin{align} 
\label{equ10}
\bm{w}_{r,e+1}^{(k)} = \bm{w}_{r,e}^{(k)} - \eta_{r,e} \sum_{i=1}^C P_{y=i}^{(k)} \nabla_{\bm{w}} \mathbb{E}_{\boldsymbol{x} \mid y=i}\left[\log f_i(\boldsymbol{x}, \bm{w}_{r,e}^{(k)})\right],
\end{align}
where \(P_{y=i}^{(k)}\) represents the data proportion of class \(i\) on clients \(k\), and \(f_i\) is a function denoting the probability for the \(i\)-th class, which is parameterized over the weight of the neural network \cite{a34}. For simplicity, we assume OCs participate in training via a fixed assignment strategy that does not change across rounds in FedOC, and we mainly explore the convergence performance of the neighboring edge aggregation strategy.


Now we introduce three widely used assumptions in FL:

\begin{assumption}
\label{assumption1} \(\ell(\boldsymbol{w})\) is \(\lambda\)-smooth, i.e., for all \(\bm{v}\) and \(\bm{w}\), \(\|\nabla \ell(\bm{w})-\nabla \ell(\bm{v})\| \leqslant \lambda\|\bm{w}-\bm{v}\|\).
\end{assumption}

\begin{assumption}
\label{assumption2} \(\nabla_{\boldsymbol{w}} \mathbb{E}_{\boldsymbol{x} \mid y=i} \log f_i(\boldsymbol{x}, \boldsymbol{w})\) is \(\lambda_{\boldsymbol{x} \mid y=i}\)-Lipschitz for each class \(i \in [C]\), i.e., for all \(\bm{v}\) and \(\bm{w}\), \(\|\nabla_{\boldsymbol{w}} \mathbb{E}_{\boldsymbol{x} \mid y=i} \log f_i(\boldsymbol{x}, \boldsymbol{w}) - \nabla_{\boldsymbol{w}} \mathbb{E}_{\boldsymbol{x} \mid y=i} \log f_i(\boldsymbol{x}, \boldsymbol{v}) \| \leq \lambda_{\boldsymbol{x} \mid y=i}\|\bm{w}-\bm{v}\|\).
\end{assumption}

\begin{assumption}
\label{assumption3} There exists a constant \(\Delta_{\max} > 0\) such that \(\left\|\nabla \ell_k(\boldsymbol{w}^{(k)}_{r,e})\right\| \leq {\Delta}_{\text{max}}, \forall k\).
\end{assumption}

To derive the convergence bound of the loss function, we introduce a cell-centralized SGD algorithm where the clients' datasets are centralized in respective ESs for model training and uploading to the CS for aggregation. Based on the above assumptions, we derive the model divergence between FedOC and cell-centralized SGD as follows:

\begin{lemma}
\label{lemma}
Suppose Assumptions 1-2 hold, then we have the following inequality:
\begin{align}
\label{equ11}
&\bigg\|\bm{w}_{R}^{(f)} - \bm{w}_{R}^{(c)}\bigg\| \notag\\
\leq&\underbrace{\sum_{p=R-\kappa}^{R-1}\left(\prod_{q=p+1}^{R-1} D_q \right)\epsilon_{p}^{\text{intra}}}_{\epsilon^{\text{intra}}} + \underbrace{\sum_{p=R-\kappa+1}^{R-1}\left(\prod_{q=p+1}^{R-1} D_q\right)\epsilon_{p}^{\text{inter}}}_{\epsilon^{\text{inter}}},
\end{align}
where \(\epsilon_{R-1}^{\text{intra}} = \sum\limits_{j=1}^{3}  \frac{\sum_{k \in \hat{\mathcal{K}}^{(f_j)}} 
   n^{(k)}}{N}
   \left(\sum_{i=1}^C \bigl|P_{y=i}^{(k)} - P_{y=i}^{(c_j)}\bigr|\right) \cdot \\ 
   \sum_{e=0}^{E-2} \eta_{R-1,e} \left(\prod_{d=e+1}^{E-1} a_{R-1,d}^{(k)}\right)
   g_{\max}\!\bigl(\boldsymbol{w}_{R-1,e}^{(c_j)}\bigr)\), \(\epsilon_{R-1}^{\text{inter}} = D_{R-1}^{(1)} \bigg\| \sum_{j=1}^{3}\rho_{R-2}^{(j)} \boldsymbol{\hat{w}}_{R-2, E}^{(f_j)}\bigg\|+D_{R-1}^{(3)} \bigg\| \sum_{j=1}^{3} \mu_{R-2}^{(j)}\boldsymbol{\hat{w}}_{R-2, E}^{(f_j)}\bigg\|\) and \(a_{R-1,d}^{(k)},~g_{\max}\!\bigl(\boldsymbol{w}_{R-1,e}^{(c_j)}\bigr),~D_{R-1}^{(j)},~\rho_{R-2}^{(j)},~\mu_{R-2}^{(j)}\) are constants denoted in Appendix A.
\end{lemma}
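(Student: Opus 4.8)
The plan is to establish a one-round recursion for the divergence $\|\bm{w}_{r+1}^{(f)} - \bm{w}_{r+1}^{(c)}\|$ between the FedOC edge models and the cell-centralized reference, and then unroll it over the $\kappa$ rounds between two consecutive cloud aggregations. First I would unfold the FedOC edge-model update: substituting the intra-cell aggregation \eqref{equ5}, the ROC aggregation \eqref{equ6}, and the inter-cell aggregation \eqref{equ7} into one another shows that $\bm{w}_{r+1}^{(f_j)}$ is a convex combination of the locally trained models $\bm{w}_{r,E}^{(k)}$ of the clients in cell $j$ and its neighbors, with coefficients determined solely by the data volumes $n^{(k)}$; because OCs use the fixed-assignment strategy, these coefficients do not change across rounds. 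Applying the same unfolding to the cell-centralized SGD yields a combination of the per-cell centralized models $\bm{w}_{r,E}^{(c_j)}$. Subtracting the two and using the triangle inequality splits the one-round gap into an intra-cell contribution (distributed versus pooled local training inside each cell) and an inter-cell contribution (decentralized ROC mixing versus ideal cloud averaging).

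For the intra-cell contribution I would compare, cell by cell, the $E$-step local trajectory of each client $k$ with the $E$-step centralized trajectory of its cell $c_j$. Under the cross-entropy decomposition \eqref{equ10}, the per-iteration gradient gap between client $k$ and cell $c_j$ equals $\sum_{i=1}^C (P_{y=i}^{(k)} - P_{y=i}^{(c_j)})\,\nabla_{\bm w}\mathbb{E}_{\bm x\mid y=i}[\log f_i(\cdot,\bm w)]$; bounding each per-class gradient at the reference iterate by the quantity $g_{\max}(\bm w_{r,e}^{(c_j)})$ defined in Appendix~A and peeling off the $E$ steps with the per-class Lipschitz constants of Assumption~\ref{assumption2} produces the factors $a_{r,d}^{(k)}$ and the total-variation weights $\sum_{i}|P_{y=i}^{(k)} - P_{y=i}^{(c_j)}|$. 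Re-aggregating over clients with their weights $n^{(k)}/N$ gives precisely $\epsilon_r^{\text{intra}}$.

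For the inter-cell contribution I would write out the three-cell chain explicitly and compare FedOC's mixing weights in \eqref{equ7} with the global-average weights realized by the cloud in the reference scheme. The middle cell receives, through its two ROCs, the aggregated models of all three cells, so its combination can be made to coincide with the global average; each boundary cell, by contrast, never receives the far cell's model, so its weight vector differs from the ideal one by a vector that sums to zero. Writing this weight mismatch (applied to the cell models $\hat{\bm w}_{r,E}^{(f_j)}$) as $\sum_{j}\rho_r^{(j)}\hat{\bm w}_{r,E}^{(f_j)}$ for cell $1$ and $\sum_{j}\mu_r^{(j)}\hat{\bm w}_{r,E}^{(f_j)}$ for cell $3$, and accounting for the propagation of this residual with the constants $D_r^{(1)},D_r^{(3)}$, yields $\epsilon_r^{\text{inter}}$. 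This term is absent in the first round after a cloud aggregation, where FedOC itself performs cloud rather than decentralized aggregation, which is why the inter-cell sum begins at $R-\kappa+1$.

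Combining the two contributions gives $\|\bm{w}_{r+1}^{(f)} - \bm{w}_{r+1}^{(c)}\| \le D_r\,\|\bm{w}_r^{(f)} - \bm{w}_r^{(c)}\| + \epsilon_r^{\text{intra}} + \epsilon_r^{\text{inter}}$, where $D_r$ absorbs the at-most-expansive effect of $E$ local SGD steps (controlled via the smoothness in Assumption~\ref{assumption1}) together with the non-expansiveness of convex aggregation. Since the cloud aggregation at round $R-\kappa$ forces $\bm{w}_{R-\kappa}^{(f)} = \bm{w}_{R-\kappa}^{(c)}$, iterating this recursion from $R-\kappa$ up to $R-1$ produces the nested products $\prod_{q=p+1}^{R-1} D_q$ and the two sums in \eqref{equ11}. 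The step I expect to be the main obstacle is the inter-cell one: it requires carrying out the nested substitution of \eqref{equ5}--\eqref{equ7} in full for each cell of the chain, correctly identifying which cell models are missing at the boundaries, and folding all the data-volume bookkeeping into $\rho_r^{(j)},\mu_r^{(j)}$ so that the residual is genuinely zero-sum; the remaining work — the double telescoping over the $E$ inner steps and the $\kappa$ outer rounds, and the applications of Assumptions~\ref{assumption1}--\ref{assumption2} — is delicate but mechanical.
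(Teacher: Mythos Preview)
Your proposal is correct and follows essentially the same route as the paper's proof: unfold \eqref{equ5}--\eqref{equ7} to write each edge model as a fixed convex combination of client models, split the one-round gap into the intra-cell client-vs-pooled training error (handled by peeling the $E$ local steps with Assumption~\ref{assumption2} to produce the factors $a_{r,d}^{(k)}$ and the total-variation weights) and the inter-cell mixing mismatch (nonzero only for the boundary cells, encoded by the zero-sum vectors $\rho_r^{(j)},\mu_r^{(j)}$), and then unroll the resulting recursion over the $\kappa$ rounds since the last cloud sync. One small clarification: the vanishing of the initial term is stated in the paper as an assumption (the two schemes are synchronized at round $R-\kappa$), not as an automatic consequence of the cloud aggregation, so you should phrase it that way as well.
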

\begin{proof}
See Appendix A
\end{proof}

This lemma indicates that the model divergence between FedOC and cell-centralized SGD mainly stems from two sources. The first term, \(\epsilon_{R-1}^{\text{intra}}\), measures the error of intra-cell data heterogeneity, which is caused by the data distribution's divergence between each client and the whole cell, i.e., \(\sum_{i=1}^{C}\left| P_{y=i}^{(k)} - P_{y=i}^{(c_j)} \right|\). The second term, \(\epsilon_{R-1}^{\text{inter}}\), reflects the model divergence induced by the neighboring ES model aggregation strategy. Since \(\sum_{j=1}^{3}\rho_{R-2}^{(j)}=\sum_{j=1}^{3} \mu_{R-2}^{(j)}=0\), \(\epsilon_{R-1}^{\text{inter}}\) vanishes when the edge models coincide across cells.

\begin{theorem}
\label{theorem}
Suppose Assumptions 1-3 hold, and set \(\bm{w}^{*}\) is the optimal global model. If the learning rate satisfies \(\eta_{r,e} = \frac{1}{r(E -1)} \quad \text{for }~ 0 \leq r \leq R~\text{and}~ 0\leq e \leq E-1\), then we have

\begin{align} 
\label{equ12}
& \ell(\bm{w}_{R+1}^{(f)}) - \ell(\bm{w}^{*}) \notag\\
\leq& \frac{\lambda}{2} \Biggl[\frac{\sum_{j=1}^{3} \hat{N}^{(f_j)} H^{(j)}\left( \sum_{i=1}^{C} \left\| P_{y=i}^{(c_j)} - P_{y=i}^{(c)} \right\|\right) }{N(R-1)(E-1)} + \notag\\
&\underbrace{\frac{\kappa \beta_{\text{max}} \prod_{p =R-\kappa+1}^{R-1} D_p}{R-\kappa}}_{\epsilon^{\text{intra}}} + \underbrace{\frac{\kappa(\kappa-1) D_{\text{max}} \bar{\Delta}_{\text{max}} \prod_{i=R-\kappa+2}^{R-1} D_{i}}{2(R-\kappa)(E-1)}}_{\epsilon^{\text{inter}}}\Biggr],
\end{align}
where \(H^{(j)}, \beta_{\text{max}},~D_{\text{max}},~ \bar{\Delta}_{\text{max}}~and~D_p,~for~p \leq R-1\)~are~constants~denoted~in~Appendix~B.
\end{theorem}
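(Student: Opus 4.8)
The plan is to combine Lemma~\ref{lemma} (the one-shot model-divergence bound between FedOC and cell-centralized SGD) with a standard descent-lemma argument applied to the cell-centralized iterate, and then control each of the three residual terms under the prescribed learning rate $\eta_{r,e} = \tfrac{1}{r(E-1)}$. First I would write $\ell(\bm{w}_{R+1}^{(f)}) - \ell(\bm{w}^{*})$ and insert the cell-centralized reference model $\bm{w}_{R+1}^{(c)}$: by $\lambda$-smoothness (Assumption~\ref{assumption1}) and the fact that $\bm{w}^{*}$ is a global minimizer (so $\nabla\ell(\bm{w}^{*})=0$), one gets $\ell(\bm{w}_{R+1}^{(f)}) - \ell(\bm{w}^{*}) \le \tfrac{\lambda}{2}\|\bm{w}_{R+1}^{(f)} - \bm{w}^{*}\|^2$, and I would further split $\|\bm{w}_{R+1}^{(f)} - \bm{w}^{*}\|$ via the triangle inequality into $\|\bm{w}_{R+1}^{(f)} - \bm{w}_{R+1}^{(c)}\| + \|\bm{w}_{R+1}^{(c)} - \bm{w}^{*}\|$. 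The first piece is exactly what Lemma~\ref{lemma} bounds; the second piece is the optimization error of the cell-centralized SGD, which under the harmonic step size $\eta_{r,e}=\tfrac{1}{r(E-1)}$ and Assumptions~\ref{assumption1}–\ref{assumption3} telescopes to an $O\!\big(\tfrac{1}{(R-1)(E-1)}\big)$ term; this is where the leading term $\frac{\sum_j \hat N^{(f_j)} H^{(j)}(\sum_i \|P_{y=i}^{(c_j)}-P_{y=i}^{(c)}\|)}{N(R-1)(E-1)}$ comes from, with $H^{(j)}$ absorbing the gradient-magnitude and smoothness constants accumulated over the $E-1$ inner steps.

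Next I would expand the Lemma~\ref{lemma} bound $\epsilon^{\text{intra}} + \epsilon^{\text{inter}}$ into the closed forms in \eqref{equ12}. For $\epsilon^{\text{intra}} = \sum_{p=R-\kappa}^{R-1}\big(\prod_{q=p+1}^{R-1} D_q\big)\epsilon_p^{\text{intra}}$, I would upper-bound each per-round term $\epsilon_p^{\text{intra}}$ by a uniform constant $\beta_{\max}$ (using Assumption~\ref{assumption3} to bound $g_{\max}$ by $\Delta_{\max}$-type quantities and the harmonic step size to bound $\sum_{e}\eta_{p,e}$), and bound the telescoping product $\prod_{q=p+1}^{R-1} D_q$ by $\prod_{p=R-\kappa+1}^{R-1} D_p$; summing $\kappa$ such terms and accounting for the $\tfrac{1}{R-\kappa}$ factor (which tracks the index of the most recent cloud synchronization) yields $\frac{\kappa\,\beta_{\max}\,\prod_{p=R-\kappa+1}^{R-1} D_p}{R-\kappa}$. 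For $\epsilon^{\text{inter}} = \sum_{p=R-\kappa+1}^{R-1}\big(\prod_{q=p+1}^{R-1} D_q\big)\epsilon_p^{\text{inter}}$, I would similarly bound each $\epsilon_p^{\text{inter}}$, using the zero-sum property $\sum_j \rho^{(j)} = \sum_j \mu^{(j)} = 0$ noted after the lemma to rewrite $\|\sum_j \rho^{(j)}\hat{\bm{w}}^{(f_j)}\|$ as a difference of edge models, then bounding that difference by the accumulated per-round drift, which under the harmonic step size is $O\!\big(\tfrac{\bar\Delta_{\max}}{E-1}\big)$ per round; the arithmetic-series sum over the $\kappa-1$ post-sync rounds produces the $\tfrac{\kappa(\kappa-1)}{2}$ factor, giving $\frac{\kappa(\kappa-1) D_{\max}\bar\Delta_{\max}\prod_{i=R-\kappa+2}^{R-1} D_i}{2(R-\kappa)(E-1)}$. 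Collecting the three contributions under the common $\tfrac{\lambda}{2}$ prefactor (and, if the squared norm is involved, using $(a+b+c)^2$-type crude bounds or, more likely, a first-order version of the smoothness inequality so that the bound stays linear in the divergences) gives exactly \eqref{equ12}.

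I expect the main obstacle to be the bookkeeping of the products $\prod_q D_q$ and the $\tfrac{1}{R-\kappa}$ normalization consistently across the intra- and inter-cell terms: the $D_q$ factors encode how the edge-aggregation map contracts or expands divergence between consecutive rounds, and one must verify that the per-round bounds telescope cleanly over the $\kappa$-round window between cloud aggregations without picking up extra factors, while simultaneously invoking the harmonic learning rate to collapse each $\sum_e \eta_{r,e}$ into the $\tfrac{1}{r(E-1)}$ scaling that drives the final $O(1/R)$ rate. A secondary subtlety is handling the fixed-assignment simplification for OCs and the boundary-cell convention (missing neighbor's model set to zero) so that the three-cell analysis is genuinely representative; I would state explicitly that under the fixed strategy the set $\mathcal{S}_r^{(l)}$ is time-invariant, which is what makes $\hat N^{(f_j)}$, $\hat{\mathcal{K}}^{(f_j)}$, and the constants $H^{(j)}$ well-defined.
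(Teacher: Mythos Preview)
Your plan is essentially the paper's own proof: split via smoothness and the triangle inequality into $A_1=\|\bm{w}_R^{(f)}-\bm{w}_R^{(c)}\|$ and $A_2=\|\bm{w}_R^{(c)}-\bm{w}^*\|$, invoke Lemma~\ref{lemma} for $A_1$, then under $\eta_{r,e}=\tfrac{1}{r(E-1)}$ bound each $\epsilon_p^{\text{intra}}$ by $\beta_{\max}/p$ and sum to get the $\kappa\beta_{\max}/(R-\kappa)$ term, and for $\epsilon^{\text{inter}}$ exploit $\sum_j\rho^{(j)}=\sum_j\mu^{(j)}=0$ together with Assumption~\ref{assumption3} to reduce $\epsilon_r^{\text{inter}}$ to an $O\!\big(\tfrac{r-R+\kappa}{(R-\kappa)(E-1)}\big)$ drift, whose sum gives the $\kappa(\kappa-1)/2$ factor. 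Two small points where the paper differs from your sketch: (i) it uses the linear form $\ell(\bm{w}_R^{(f)})-\ell(\bm{w}^*)\le \tfrac{\lambda}{2}\|\bm{w}_R^{(f)}-\bm{w}^*\|$ directly rather than the squared norm, so no $(a+b+c)^2$ manipulation is needed; and (ii) for $A_2$ it does not rederive anything but simply quotes the bound from \cite{a34}, treating cell-centralized SGD as FedAvg with cells playing the role of clients.
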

\begin{proof}
See Appendix B
\end{proof}

Based on \textbf{Theorem 1}, the loss error of FedOC consists of three terms. 
The first arises from the divergence between the data distributions of individual cells and the global dataset, i.e., \(\sum_{j=1}^{3} \hat{N}^{(f_j)} H^{(j)}\left( \sum_{i=1}^{C} \left\| P_{y=i}^{(c_j)} - P_{y=i}^{(c)} \right\|\right)\). The second term corresponds to intra-cell data heterogeneity, denoted by \(\epsilon^{\text{intra}}\), while the third term reflects the model divergence across different ESs, denoted by \(\epsilon^{\text{inter}}\). Moreover, the cloud aggregation interval \(\kappa\) influences the convergence rate: a larger \(\kappa\) amplifies the mismatch among ES models and therefore slows down convergence. Nevertheless, since \(\kappa\) is fixed, as \(R \to \infty\) all error terms converge to zero, which guarantees that FedOC converges to the optimal solution.
\begin{table}[t]
\caption{Simulation Parameters}
\vspace{-0.5cm}
\begin{center}
\small
\begin{tabular}{p{5.4cm}|p{2.7cm}}  
\hline
\textbf{Parameters} &    \textbf{Values} \\
\hline
Number of clients, \(K\) & \(60\)\\
\hline
Number of iterations, \(R\) & \(500\)\\
\hline
Number of local training epochs, \({U}_{k}(r)\) & \(5\)\\
\hline
Client's transmit power, \(p~\mathrm{(W)}\) & \(1\) \\
\hline
ES's transmit power, \(P~\mathrm{(W)}\) & \(5\)\\
\hline
Total channel bandwidth, \(B~\mathrm{(MHz)}\)  & \(50\)\\
\hline
Client's one-epoch update time (seconds) & MNIST: \([0.1, 0.2]\), CIFAR-10: \([1, 2]\) \\
\hline
Noise power spectral density, \({N}_{0}~\mathrm{(dBm/Hz)}\) & \(-174\)\\
\hline
The number of model parameters & for MNIST: \(21840\), for CIFAR-10: \(1.14\) million \\
\hline
Batch size & \(20\)\\
\hline
Initial learning rate, \(\eta\) & for MNIST: \(0.01\), for CIFAR-10: \(0.1\)\\
\hline 
Exponential decay factor of learning rate in SGD & for MNIST: \(0.995\), for CIFAR-10: \(0.992\)\\
\hline
\end{tabular}
\label{tab1}
\end{center}
\vspace{-0.5cm}
\end{table}
\section{Performance Evaluation}

\subsection{Simulation Setup}
We consider a multi-server network consisting of \(L=3\) ESs and \(K = 60\) clients. Each ES is located at the center of a circular coverage area with a radius of \(600\) meters. For a fair comparison, multiple clients are randomly distributed across cells, where the numbers of clients in the cells satisfy \(U_1+V_{1,2}/2 = V_{1,2}/2+U_2+V_{2,3}/2= V_{2,3}+U_3\). To better reflect practical scenarios, we examine two degrees of overlapping coverage between cells: i) a minimal overlap case with only \(1\) overlapping client located in each overlapping region, and ii) a moderate overlap case with \(K/(2L)=10\) OCs in each overlap region, i.e., \(|V_{1,2}| = |V_{2,3}| = 10\) out of \(60\). The channel gain of communication link is composed of both small-scale fading and large-scale fading. The small-scale fading is set as a Rayleigh distribution with uniform variance, and the large scale fading is generated using the path-loss model of \(128.1+37.6\text{log}_{10}(d(\text{km}))\) \cite{a18}, \cite{a41}. The noise power and device power are assume to be \(-174 \text{dBm/Hz}\) and \(p=1\text{W}\), respectively. The transmit power of each ES is set to \(5\text{W}\) and the total bandwidth is \(50\text{MHz}\). We set the latency of each client’s training time per epoch as 0.1–0.2s for MNIST and 1–2s for CIFAR-10 \cite{a31}. The key simulation parameters are summarized in Table~\ref{tab1}.
\begin{figure}[t]
    \centering
    \includegraphics[width=1\linewidth]{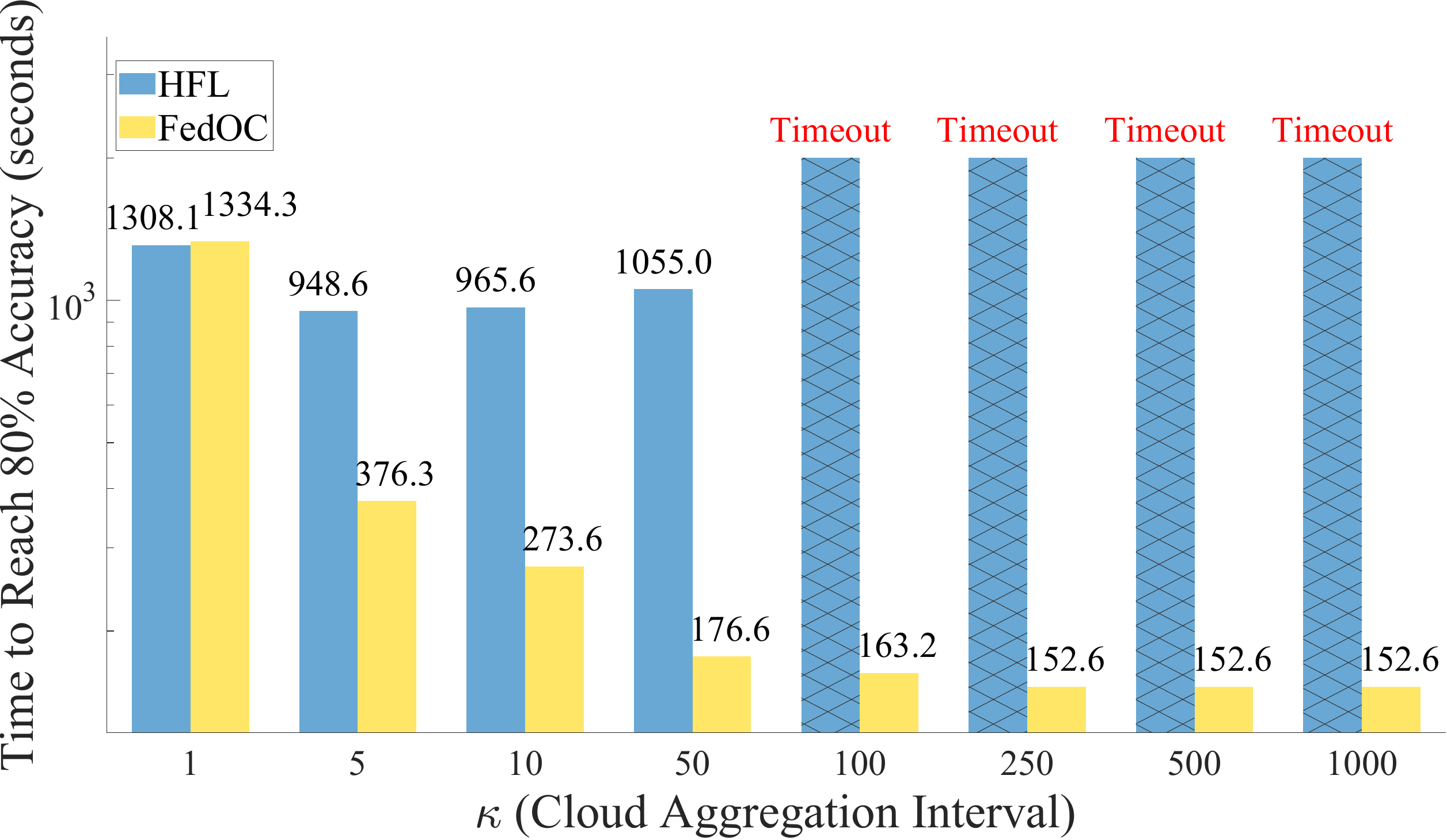}
    \caption{Time to reach target accuracy versus \(\kappa\) on MNIST. "Timeout": failed to reach target accuracy within 1600 seconds.}
    \label{fig:bars1}
\end{figure}

\begin{figure}[t]
    \centering
    \includegraphics[width=1\linewidth]{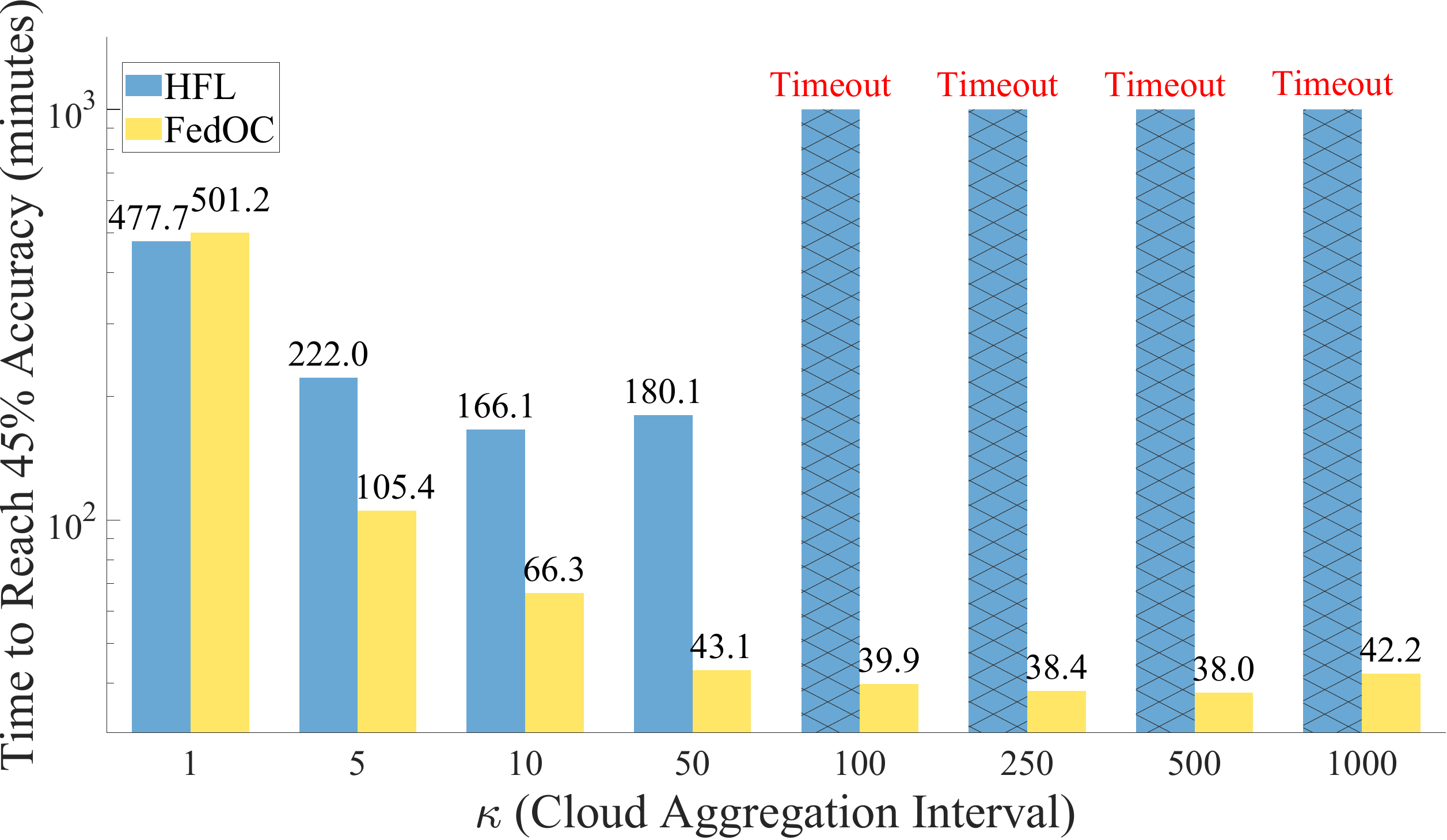}
    \caption{Time to reach target accuracy versus \(\kappa\) on CIFAR-10. "Timeout": failed to reach target accuracy within 600 minutes.}
    \label{fig:bars2}
\end{figure}

\begin{figure*}[t]
    \centering
    \subfloat[MNIST\label{fig:NON-IID-1-mnist-new}]{
        \includegraphics[width=0.48\textwidth]{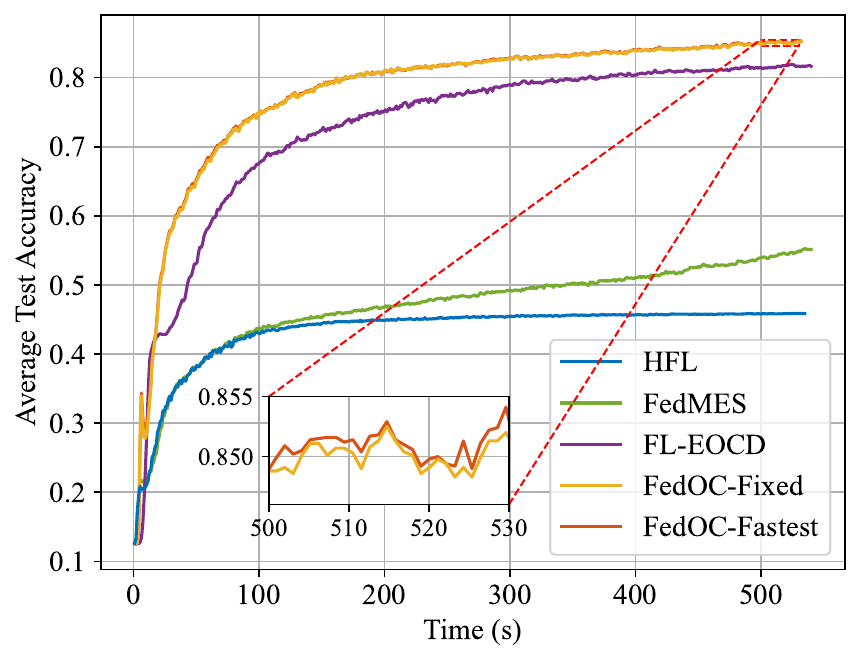}
    }
    \hfill
    \subfloat[CIFAR-10\label{fig:NON-IID-1-cifar10-new}]{
        \includegraphics[width=0.48\textwidth]{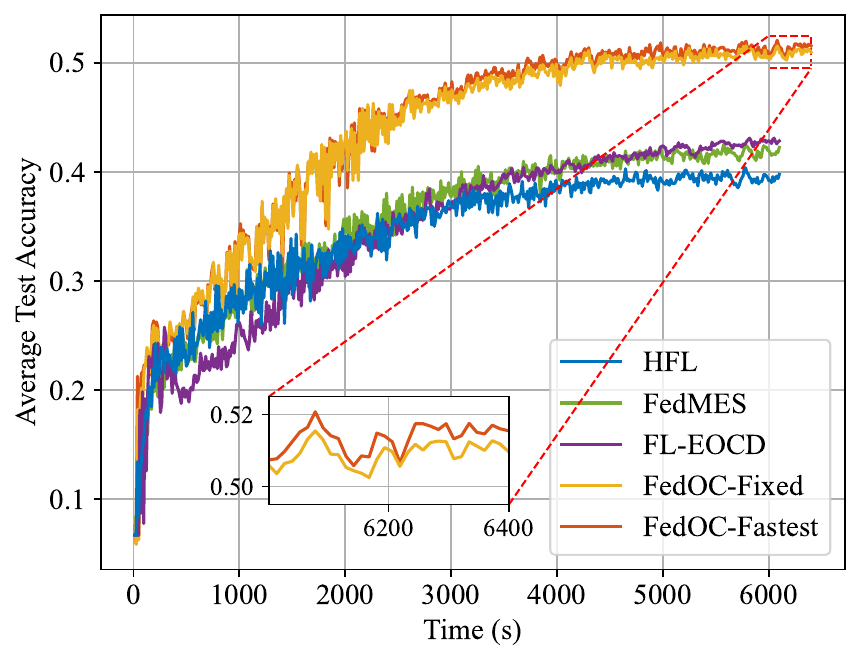}
    }
    \caption{Average test accuracy versus training time under a 3-ES deployment with \(|\mathcal{V}_{1,2}| = |\mathcal{V}_{2,3}| = 1\) setup.}
    \label{fig:NON-IID-1-overlap-new}
\end{figure*}

\begin{figure*}[t]
    \centering
    \subfloat[MNIST\label{fig:NON-IID-10-mnist}]{
        \includegraphics[width=0.48\textwidth]{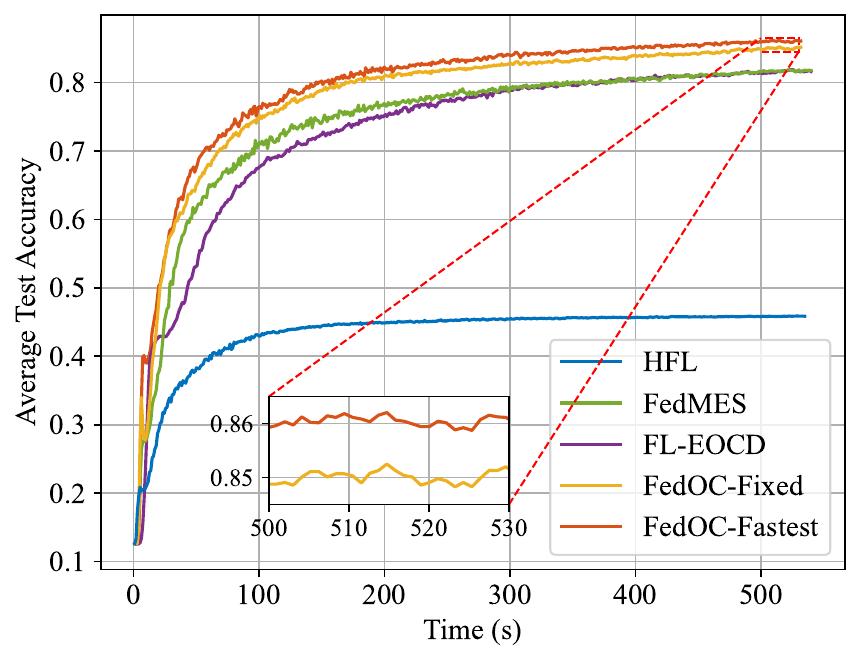}
    }
    \hfill
    \subfloat[CIFAR-10\label{fig:NON-IID-10-cifar10}]{
        \includegraphics[width=0.48\textwidth]{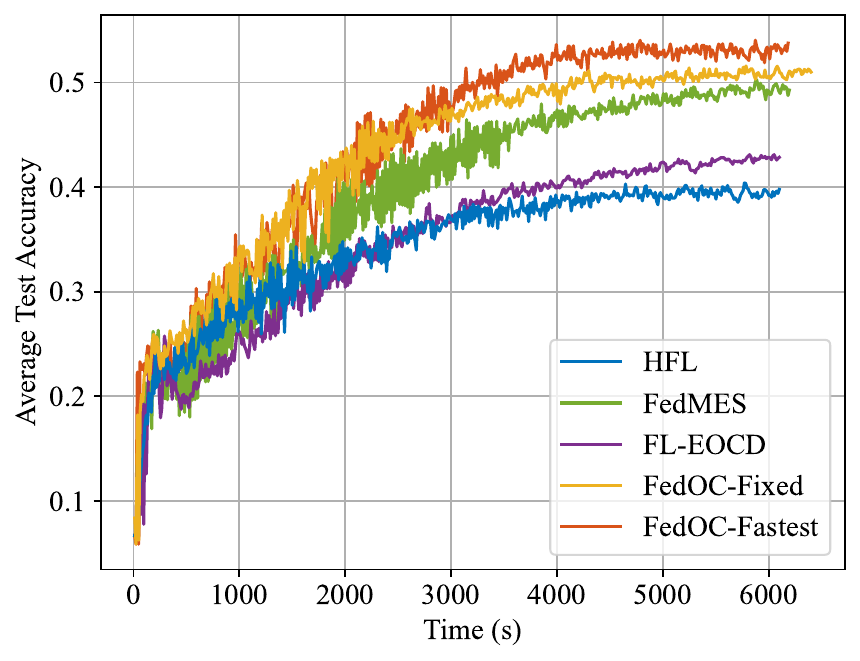}
    }
    \caption{Average test accuracy versus training time under a 3-ES deployment with \(|\mathcal{V}_{1,2}| = |\mathcal{V}_{2,3}| = 10\) setup.}
    \label{fig:NON-IID-10-overlap}
\end{figure*}
We conduct experiments using two widely adopted image classification datasets, MNIST \cite{a42} and CIFAR-10 \cite{a43}. For the MNIST dataset, we adopt a lightweight Convolutional Neural Network (CNN)  with 21,840 parameters, following the design proposed in \cite{a7}. For CIFAR-10, we employ a deeper six-layer CNN with 1.14 million parameters, as described in \cite{a31}. Unless otherwise noted, each client performs \(5\) local epochs per round before uploading its model update. We consider the data distribution heterogeneity as Client Non-IID and Cell Non-IID, where each client again holds data from 2 classes, and each edge server's cell is constrained to at most 5 unique classes, resulting in significant class imbalance across cells. 

We compare FedOC against three baseline schemes: 
\begin{itemize}
    \item \textbf{HFL\cite{a7} :} A conventional three-tier hierarchical FL that ignores overlaps, where each client is associated with a single fixed ES and the CS aggregates the ES models in each round.
    \item \textbf{FedMES\cite{a17}:} An overlapping-client method where each overlapping device averages the models from all its connected ESs after the downloading stage, trains on this merged model, and then uploads the result to all those ESs.
    \item \textbf{FL-EOCD}~\cite{a18}: A two-step variant of FedMES. The overlapping client first caches the received edge models and trains locally as in FedMES, and then aggregates the local model with the cached edge models before uploading the result to the corresponding ESs.
    \item \textbf{FedOC with fixed NOC participating strategy (FedOC-Fixed):} To demonstrate the advantage of the OC \textit{Fastest Selection strategy}, we let the clients in the overlapped regions communicate with only one fixed ES among different rounds.
\end{itemize}

For ease of distinction, we refer to our proposed FedOC algorithm with OCs under the \textit{fastest selection strategy} as FedOC-fastest in the following figures.

\subsection{Impact of \(\kappa\) on FedOC}

As shown in \textbf{Theorem 1}, frequent global synchronization, i.e., a small \(\kappa\), can accelerate learning by rapidly reducing inter-cell model divergence, but it incurs higher latency and networking overhead each round. A key question is how often cloud aggregation should be performed to balance convergence speed against communication costs. Similarly to \(t^{(l)}_{\text{edge}}\), we define \( t^{(l)}_{\text{cloud}}\) as the time for a one-round communication between ES \(l\) and the CS. We define \( t_c = t^{(l)}_{\text{cloud}} / t^{(l)}_{\text{edge}} \), where usually \( t_c \gg 1 \) holds due to the significant communication delay between the client and the CS \cite{a17}. We set \(t_c = 10\) for experiments as in the setup in \cite{a17}. This subsection investigates how the cloud aggregation interval affects training efficiency and contrasts the performance of FedOC against that of the classical HFL approach.

\subsubsection{Experiments under 3 overlapping cells}
\begin{figure*}[t]
    \centering
    \subfloat[MNIST\label{fig:NON-IID-1-mnist-4cells}]{
        \includegraphics[width=0.48\textwidth]{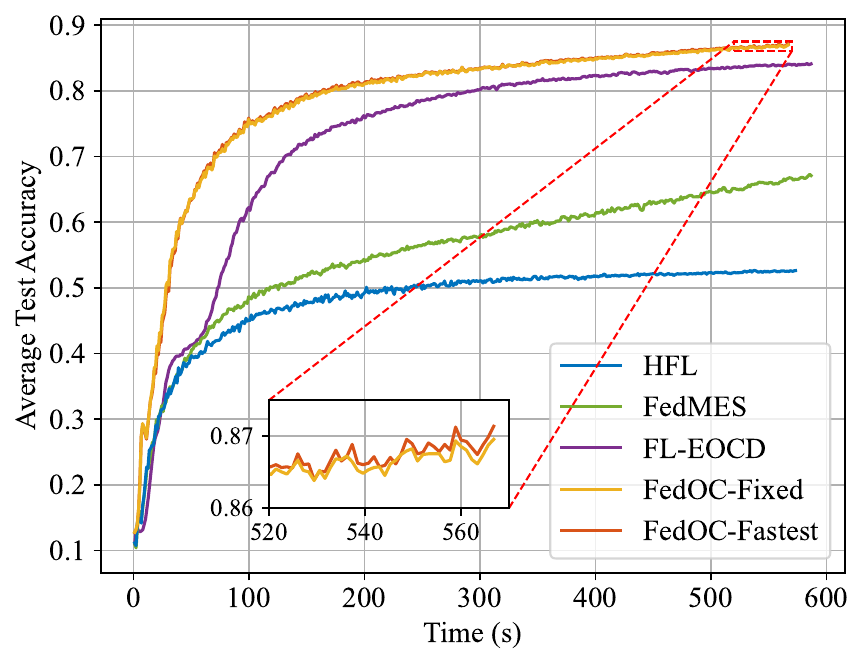}
    }
    \hfill
    \subfloat[CIFAR-10\label{fig:NON-IID-1-cifar10-4cells}]{
        \includegraphics[width=0.48\textwidth]{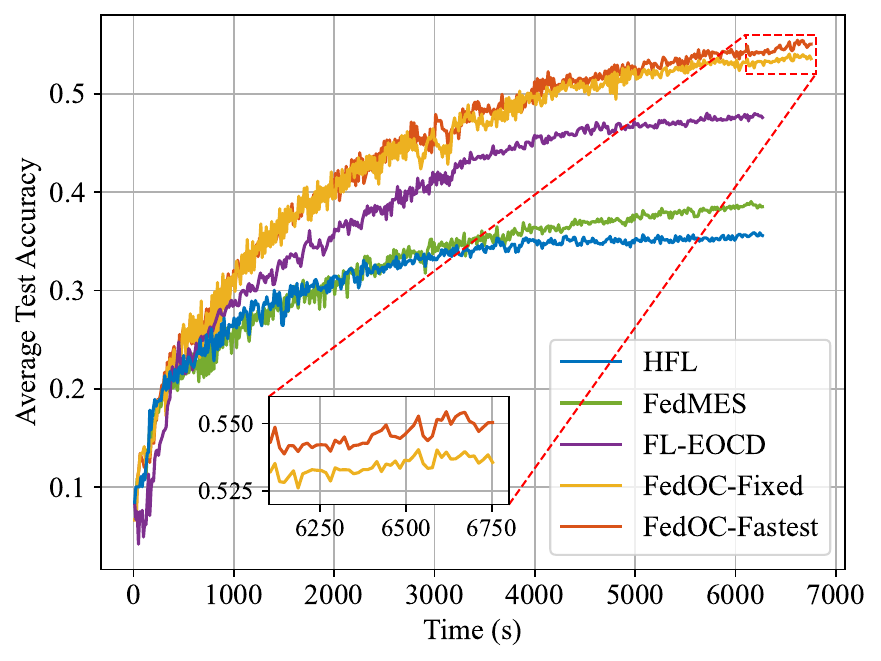}
    }
    \caption{Average test accuracy versus training time under a 4-ES deployment with \(|\mathcal{V}_{1,2}| = |\mathcal{V}_{2,3}| = 1\) setup.}
    \label{fig:NON-IID-1-overlap-4cells}
\end{figure*}

\begin{figure*}[t]
    \centering
    \subfloat[MNIST\label{fig:NON-IID-7-mnist-4cells}]{
        \includegraphics[width=0.48\textwidth]{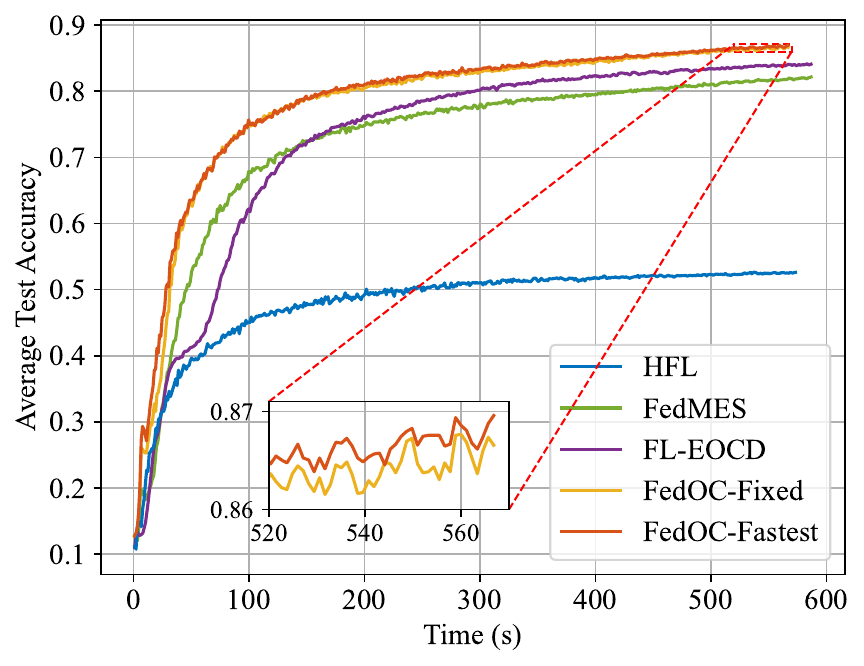}
    }
    \hfill
    \subfloat[CIFAR-10\label{fig:NON-IID-7-cifar10-4cells}]{
        \includegraphics[width=0.48\textwidth]{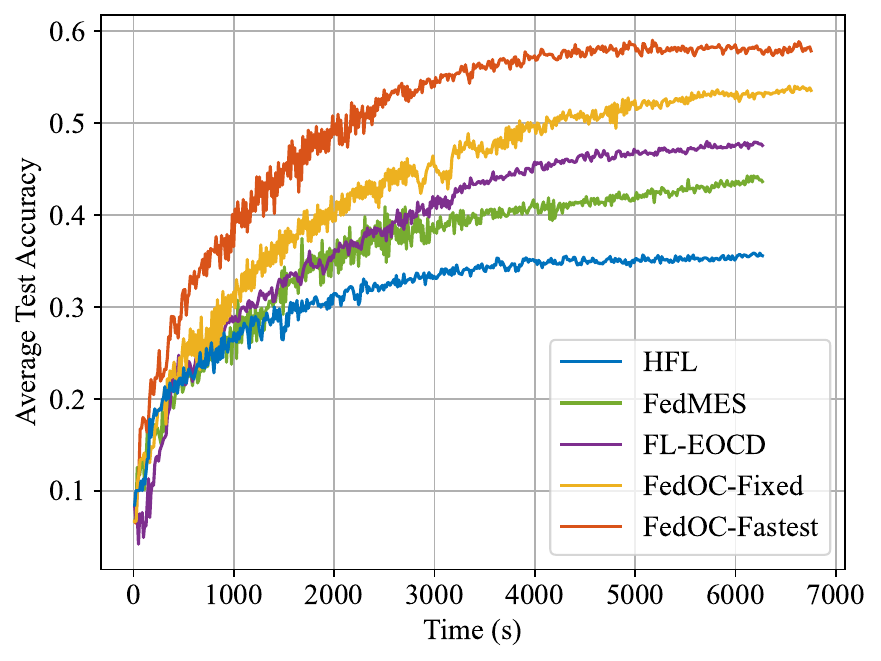}
    }
    \caption{Average test accuracy versus training time under a 4-ES deployment with \(|\mathcal{V}_{1,2}| = |\mathcal{V}_{2,3}| = 7\) setup.}
    \label{fig:NON-IID-7-overlap}
\end{figure*}

As shown in Fig.~\ref{fig:bars1} and Fig.~\ref{fig:bars2}, for HFL, the time to reach the target accuracy first decreases and then increases as $\kappa$ grows. This is because, within a moderate range, reducing the cloud aggregation frequency helps cut down communication overhead and thus accelerates convergence. However, as $\kappa$ continues to increase, the performance of HFL begins to deteriorate due to the lack of timely global synchronization, often resulting in slower convergence or even failure to converge when $\kappa$ becomes large. In contrast, our method significantly delays this trend. For MNIST, the inter-ES model transmissions in FedOC enable convergence before 250 rounds even without any cloud aggregation. As a result, increasing the cloud aggregation interval continuously reduces the convergence time in the range of \(\kappa \leq 250\), and the time to reach the target accuracy does not increase further even when $\kappa$ exceeds 250. For CIFAR, enlarging the cloud aggregation interval also reduces communication overhead and facilitates convergence. Moreover, due to inter-ES model transmissions, the model can still converge quickly even when $\kappa$ is large. Only when $\kappa = 1000$, the time to reach the target accuracy shows a slight increase. Although model transmission occurs only among neighboring edges each round, information from distant edges propagates over rounds. For example, in round~1, ES~1 receives the locally aggregated model from ES~2, which will aggregate models from ESs~1--3. In round~2,  ES~1 receives the updated model from ES 2, which already incorporates information from ES 3 in the previous round. By extension, after \(L-1\) rounds, the model at ES~1 contains information from all ESs. This neighbor ES exchange enables effective model propagation even under cloud-free aggregation. Based on these observations, we further investigate the cloud-free setting, comparing our approach with decentralized baselines and a cloud-free variant of HFL under different data distributions and overlapping configurations.

\subsection{Algorithm Evaluation}

Fig.~\ref{fig:NON-IID-1-overlap-new} presents the convergence results for the minimal overlap setting where only one ROC exists between each pair of overlapped cells. Here, the benefits of inter-ES models exchange by ROC relaying are pronounced. HFL and FedMES only perform intra-cell aggregation in each round, and as a result, their models plateau at a lower accuracies. FL-EOCD fares better: by integrating other cells’ models, albeit with a delay, it achieves higher accuracy than HFL/FedMES in this scenario. However, due to the staleness of the aggregated models, FL-EOCD’s advantage over FedMES diminishes on the harder CIFAR-10 task. In contrast, FedOC’s two mechanisms achieve the similar highest accuracy on both datasets, with particularly striking gains for CIFAR-10. By the end of 500 rounds, FedOC-Fastest attains a final CIFAR-10 accuracy approximately 9–12 percentage points higher than all baselines. Since there is only one OC, the two OC selection mechanisms exhibit negligible differences in model performance. The superiority of FedOC confirms that real-time inter-cell update relays are crucial especially when each cell’s data is biased.  

As shown in Fig.~\ref{fig:NON-IID-10-overlap}, when the OCs increase to 10 clients per region, the accuracy of FedMES increases significantly, benefiting from its reliance on multiple OCs to aggregate ES models, which indirectly enhances the generalization of each ES model. However, relying solely on the limited aggregation within OCs provides only constrained gains in model generalization, while also incurring delays as each OC must wait for all associated ESs to finish transmission. FedOC-Fixed alleviates this issue by introducing inter-ES model exchange via the ROC, enabling real-time transmission of model information among neighboring ESs and thereby indirectly obtaining model knowledge from other cells. On top of this, FedOC-Fastest allows OCs to begin training as soon as a single ES model is received, effectively removing waiting delays and accelerating round completion without degrading model quality. Consequently, it achieves the best performance on both datasets.

\subsubsection{Experiments with Multiple Overlapping ESs}
To further evaluate the scalability and robustness of FedOC, we extend our experiments to a four-ES topology. The corresponding results are presented in Fig.~\ref{fig:NON-IID-1-overlap-4cells} and Fig.~\ref{fig:NON-IID-7-overlap}. As shown in these figures, FedOC continues to deliver strong performance as the number of edge servers increases, even under highly skewed data distributions. The trends are consistent with those observed in the 3-ES setting: FedOC-Fastest achieves the fastest convergence and the highest accuracy, followed by FedOC-Fixed. In particular, as shown in Fig.~\ref{fig:NON-IID-7-cifar10-4cells}, FedOC-Fastest attains more than a 10\%~25\% accuracy improvement on CIFAR-10 compared with other three benchmarks, while the Fixed variant achieves a gain of over \(5\%\). These results highlight the robustness of FedOC and demonstrate the scalability of FedOC in larger networks.

\section{Conclusion}
In this paper, we presented FedOC, a novel federated learning architecture that leverages OCs as both communication relays and adaptive training participants to improve learning efficiency in wireless edge networks. FedOC supports round-wise model sharing across neighboring edge servers without reliance on a central server, thereby speeding up the global learning process. Besides that, we introduced a dynamic participating strategy for OCs that adaptively selects the first-received ES model as the initial training model of each round to minimize latency. We derived the theoretical convergence bound of FedMes and provided insights on the convergence behavior. Extensive experiments confirm that FedOC consistently outperforms classic FL algorithms and recent overlapping-area based methods on both convergence speed and accuracy. Our solution accelerates FL to support latency-sensitive applications in wireless networks especially characterized by high client–server delays and non-IID data distributions across edge servers.

\appendix
\subsection{Proof of Lemma 1}
Using the expression in \eqref{equ7}, we can rewrite the ES model in our FedOC framework as
\begin{align} 
\label{equ13}
\bm {w}_{r+1}^{(f_l)}
= \sum_{i=l-1}^{l+1}\frac{\hat{N}_{r}^{(f_i)} \bm{\hat{w}}_{r,E}^{(f_i)}}{\sum_{i=l-1}^{l+1} \hat{N}_{r}^{(f_i)}},
\end{align}
where \(\bm{\hat{w}}_{r,E}^{(f_i)}\) is denoted as:

\begin{align}
\label{equ14}
\bm{\hat{w}}_{r,E}^{(f_i)} = \sum_{k\in \mathcal{\hat{K}}_{r}^{(f_i)}}\frac{n^{(k)}}{\hat{N}_{r}^{(f_i)}} \bm{w}_{r,E}^{(k)},
\end{align}
and \(\hat{N}_{r}^{(f_i)} = \sum_{k\in \mathcal{\hat{K}}_{r}^{(f_i)}} n^{(k)}\), and based on the \eqref{equ5} and \eqref{equ6}, \(\hat{\mathcal{K}}_{r}^{(f_i)}\) satisfies:

\begin{align}
\label{equ15}
\hat{\mathcal{K}}_{r}^{(f_i)} = 
\begin{cases} 
\mathcal{S}_{l} \cup b_{1,2}, & l = 1\\
\mathcal{S}_{l}, & l = 2 \\
\mathcal{S}_{l} \cup b_{2,3}. & l=3. \\ 
\end{cases}
\end{align}
Under these definitions, each ROC is mathematically associated with a unique cell. Since the participating client sets and their data samples in each cell remain fixed across rounds, we drop the subscript \(r\) in 
\(\hat{\mathcal{K}}_{r}^{(f_l)}\) and \(\hat{N}_{r}^{(f_l)}\).

For FedOC, we assume the cloud aggregation is performed every \(\kappa\) rounds. let \(\bm{w}_{r}^{(f)}\) denote the cloud model, we have

\begin{align}
\label{equ16}
\bm{w}_{r}^{(f)} = \sum_{i=1}^{L}
\frac{\hat{N}^{(f_i)} \hat{\bm{w}}_{r-1,E}^{(f_i)}}{\sum_{i=1}^{L} \hat{N}^{(f_i)}}
\quad \text{for } \kappa \mid r.
\end{align}

For cell-centralized SGD, let \(\bm{w}_{r}^{(c)}\) denote the global model in round \(r\), and \(\bm{w}_{r,e}^{(c_l)}\) represent the model of cell \(l\) after \(e\) steps of local SGD under this centralized setting. The update rule of cell-centralized SGD is given by:
\begin{align} 
\label{equ17}
\bm{w}_{r,e+1}^{(c_l)} &= \bm{w}_{r,e}^{(c_l)} - \eta_{r,e}  \sum_{i=1}^C P_{y=i}^{(c_l)} \nabla_{\bm{w}} \mathbb{E}_{\boldsymbol{x} \mid y=i}\left[\log f_i(\boldsymbol{x}, \bm{w}_{r,e}^{(c_l)})\right],
\end{align}
where \(e \in \{0, 1, \ldots, E-1\}\), and \(P_{y=i}^{(c_l)}\) represents the population of data samples with label \(i\) across all clients in cell \(l\).  In the centralized-SGD, cloud aggregation occurs in each round, so the cloud model satisfies
\begin{align} 
\label{equ18}
\bm{w}_{r+1}^{(c)} =
\frac{\sum_{i=1}^{L} \hat{N}^{(f_i)}\, \bm{w}_{r,E}^{(c_i)}}
     {\sum_{i=1}^{L} \hat{N}^{(f_i)}}, \quad r=0,1, \ldots,R-1 .
\end{align}

Based on~\eqref{equ10}, (14)-(18), we have

\begin{align}
\label{equ19}
&\left\|\bm{w}_{R}^{(f)} - \bm{w}_{R}^{(c)} \right\|\notag\\
=& 
\bigg\| 
\sum_{j=1}^{L} 
\frac{\hat{N}^{(f_j)}}{N} 
\left(\bm{\hat{w}}_{R-1,E}^{(f_j)} - \bm{w}_{R-1,E}^{(c_j)} \right)
\bigg\| \notag\\
 \stackrel{1}=& \bigg\| 
\sum_{j=1}^{L} 
\frac{\hat{N}^{(f_j)}\sum_{k=1}^{\hat{K}_{r}^{(f_j)}} 
\frac{n^{(k)}}{\hat{N}^{(f_j)}} }{N} 
\left[ \left( \bm{w}_{R-1,E-1}^{(k)} - \bm{w}_{R-1,E-1}^{(c_j)} \right) \right. \notag\\
& + \left. \eta_{R-1,E-1} \sum_{i=1}^C P_{y=i}^{(k)} 
\left( \nabla_{\bm{w}} \mathbb{E}_{\boldsymbol{x} \mid y=i} \left[\log f_i(\boldsymbol{x}, \bm{w}_{R-1,E-1}^{(k)}) \right] \right. \right. \notag\\
& \left. \left. - \nabla_{\bm{w}} \mathbb{E}_{\boldsymbol{x} \mid y=i} \left[\log f_i(\boldsymbol{x}, \bm{w}_{R-1,E-1}^{(c_j)}) \right] \right)
\right] 
\bigg\| \notag\\
 \stackrel{2}\leq &
\sum_{j=1}^{L} 
\frac{\hat{N}^{(f_j)}}{\sum_{j=1}^{L} \hat{N}^{(f_j)}} 
\sum_{k=1}^{\hat{K}_{r}^{(f_j)}} 
\frac{n^{(k)}}{\hat{N}^{(f_j)}} 
\left(
1 + a_{R-1,E-1}^{(k)} \right) \notag \\
& \cdot \left\|
\bm{w}_{R-1,E-1}^{(k)} - \bm{w}_{R-1,E-1}^{(c_j)} \right\|,
\end{align}
where equality 1 holds because for each class \(i \in \{1, \ldots, C\}\), \( P_{y=i}^{(c_j)} = \sum_{k=1}^{\hat{K}_{r}^{(f_j)}} 
\frac{n^{(k)}}{\hat{N}^{(f_j)}} P_{y=i}^{(k)}\). Inequality 2 holds because of \textbf{Assumption~ \ref{assumption1}} and we denote \(a_{r,q}^{(k)}=1+\eta_{r,q} \sum_{i=1}^C P_{y=i}^{(k)} \lambda_{\boldsymbol{x} \mid y=i}\). 

Inspired by \cite{a34}, we bound the discrepancy between the local model of client \(k \in \hat{\mathcal{K}}^{(c_j)}\) and the corresponding ES model as follows: 

\begin{align} 
\label{equ20}
&\bigg\| \boldsymbol{w}_{R-1,E-1}^{(k)} - \boldsymbol{w}_{R-1,E-1}^{(c_j)} \bigg\| \notag\\
\stackrel{3}=& \bigg\| \boldsymbol{w}_{R-1,E-2}^{(k)} - \boldsymbol{w}_{R-1,E-2}^{(c_j)} \bigg\| \notag\\
& + \eta_{R-1,E-2} \bigg\| \sum_{i=1}^C P_{y=i}^{(k)} \nabla_{\boldsymbol{w}} \mathbb{E}_{\boldsymbol{x} \mid y=i} \left[ \log f_i(\boldsymbol{x}, \boldsymbol{w}_{R-1,E-2}^{(k)}) \right] \notag\\
& - \sum_{i=1}^C P_{y=i}^{(k)} \nabla_{\boldsymbol{w}} \mathbb{E}_{\boldsymbol{x} \mid y=i} \left[ \log f_i(\boldsymbol{x}, \boldsymbol{w}_{R-1,E-2}^{(c_j)}) \right] \notag\\
& + \sum_{i=1}^C P_{y=i}^{(k)} \nabla_{\boldsymbol{w}} \mathbb{E}_{\boldsymbol{x} \mid y=i} \left[ \log f_i(\boldsymbol{x}, \boldsymbol{w}_{R-1,E-2}^{(c_j)}) \right] \notag\\
& - \sum_{i=1}^C P_{y=i}^{(c_j)} \nabla_{\boldsymbol{w}} \mathbb{E}_{\boldsymbol{x} \mid y=i} \left[ \log f_i(\boldsymbol{x}, \boldsymbol{w}_{R-1,E-2}^{(c_j)}) \right] \bigg\| \notag\\
\leq& \bigg\| \boldsymbol{w}_{R-1,E-2}^{(k)} - \boldsymbol{w}_{R-1,E-2}^{(c_j)} \bigg\| \notag\\
& + \eta_{R-1,E-2} \bigg[ \bigg\| \sum_{i=1}^C P_{y=i}^{(k)} \bigg( \nabla_{\boldsymbol{w}} \mathbb{E}_{\boldsymbol{x} \mid y=i} \left[ \log f_i(\boldsymbol{x}, \boldsymbol{w}_{R-1,E-2}^{(k)}) \right] \notag\\
& - \nabla_{\boldsymbol{w}} \mathbb{E}_{\boldsymbol{x} \mid y=i} \left[ \log f_i(\boldsymbol{x}, \boldsymbol{w}_{R-1,E-2}^{(c_j)}) \right] \bigg) \bigg\| \notag\\
& + \bigg\| \sum_{i=1}^C \left( P_{y=i}^{(k)} - P_{y=i}^{(c_j)} \right)  \nabla_{\boldsymbol{w}} \mathbb{E}_{\boldsymbol{x} \mid y=i} \left[ \log f_i(\boldsymbol{x}, \boldsymbol{w}_{R-1,E-2}^{(c_j)}) \right] \bigg\| \bigg] \notag\\
\stackrel{4}{\leq}& \bigg(1 + \eta_{R-1,E-2} \sum_{i=1}^C P_{y=i}^{(k)} \lambda_{x\mid y=i}\bigg) \bigg\| \boldsymbol{w}_{R-1,E-2}^{(k)} - \boldsymbol{w}_{R-1,E-2}^{(c_j)} \bigg\| \notag\\
& + \eta_{R-1,E-2} \sum_{i=1}^C \bigg| P_{y=i}^{(k)} - P_{y=i}^{(c_j)} \bigg|  g_{\max} (\boldsymbol{w}_{R-1,E-2}^{(c_j)}) \notag\\
=& a_{R-1,E-2}^{(k)} \bigg\| \boldsymbol{w}_{R-1,E-2}^{(k)} - \boldsymbol{w}_{R-1,E-2}^{(c_j)} \bigg\| \notag\\
& +  \eta_{R-1,E-2} \sum_{i=1}^C \bigg| P_{y=i}^{(k)} - P_{y=i}^{(c_j)} \bigg| g_{\max} (\boldsymbol{w}_{R-1,E-2}^{(c_j)}),
\end{align}
where equality 3 holds because the SGD update rules in~\eqref{equ10} and~\eqref{equ17}. Inequality 4 holds because the \textbf{Assumption 1} and we denote \(g_{\max }\left(\boldsymbol{w}_{r, q}^{(c_j)}\right)=\max _{i=1}^C\left\|\nabla_{\boldsymbol{w}} \mathbb{E}_{\boldsymbol{x} \mid y=i} \log f_i\left(\boldsymbol{x}, \boldsymbol{w}_{r, q}^{(c_j)}\right)\right\|\) for \(r \in \{1,2,\ldots,R\}\) and \(q \in \{0,1,\ldots, E-1\}\). From~\eqref{equ20}, it follows that

\begin{align}
\label{equ21}
&\bigg\| \boldsymbol{w}_{R-1,E-1}^{(k)} - \boldsymbol{w}_{R-1,E-1}^{(c_j)} \bigg\| \notag \\
\leq& a_{R-1,E-2}^{(k)} \bigg\| \boldsymbol{w}_{R-1,E-2}^{(k)} - \boldsymbol{w}_{R-1,E-2}^{(c_j)} \bigg\| \notag \\
& + \sum_{i=1}^C \bigg| P_{y=i}^{(k)} - P_{y=i}^{(c_j)} \bigg| \eta_{R-1,E-2} g_{\max} (\boldsymbol{w}_{R-1,E-2}^{(c_j)}) \notag \\
\leq& \left( \prod_{e=0}^{E-2} a_{R-1,e}^{(k)}\right)
\left\|\boldsymbol{w}_{R-1, 0}^{(k)}-\boldsymbol{w}_{R-1, 0}^{(c_j)}\right\| + \sum_{i=1}^C\left|P_{y=i}^{(k)}-P_{y=i}^{(c_j)}\right|  \notag\\
&\cdot \sum_{e=0}^{E-2}  \eta_{R-1,e}  \left(\prod_{d=e+1}^{E-2}a_{R-1,d}^{(k)}\right) g_{\max }\left(\boldsymbol{w}_{R-1,e}^{(c_j)}\right),
\end{align}
where we define \(\prod_{i = a}^{b} D_i = 1\) when \(a > b\), following standard convention. Substituting \eqref{equ21} into \eqref{equ19}, we have

\begin{align} 
\label{equ22}
&\left\|\bm{w}_{R}^{(f)} - \bm{w}_{R}^{(c)} \right\| \notag\\
\leq& \sum_{j=1}^{L} 
\frac{\hat{N}^{(f_j)}}{\sum_{j=1}^{L} \hat{N}^{(f_j)}} 
\sum_{k \in \hat{\mathcal{K}}^{(f_j)}} 
\frac{n^{(k)}}{\hat{N}^{(f_j)}} 
\Bigg[ \left( \prod_{e=0}^{E-1} a_{R-1,e}^{(k)}\right) \notag\\ 
\quad& \cdot \left\|\boldsymbol{w}_{R-1, 0}^{(f_j)}-\boldsymbol{w}_{R-1, 0}^{(c)}\right\| + \sum_{i=1}^C\left|P_{y=i}^{(k)}-P_{y=i}^{(c_j)}\right| \notag\\
\quad& \cdot \sum_{e=0}^{E-2}  \eta_{R-1,e}  \left(\prod_{d=e+1}^{E-1} a_{R-1,d}^{(k)}\right) g_{\max }\left(\boldsymbol{w}_{R-1,e}^{(c_j)}\right) \Bigg]  \notag\\
= &\sum_{j=1}^{L} 
\bigg( D_{R-1}^{(j)}  \left\|\boldsymbol{w}_{R-1,0}^{(f_j)}-\boldsymbol{w}_{R-1,0}^{(c)}\right\| + G_{R-1}^{(j)} \bigg),\end{align}
where we denote

\begin{align}
\label{equ23}
D_{R-1}^{(j)}
&= \frac{\hat{N}^{(f_j)}}{\sum_{u=1}^{L}\hat{N}^{(f_u)}}
   \sum_{k \in \hat{\mathcal{K}}^{(f_j)}} 
   \frac{n^{(k)}}{\hat{N}^{(f_j)}} 
   \left(\prod_{e=0}^{E-1} a_{R-1,e}^{(k)}\right),
\end{align}

\begin{align}
\label{equ24}
G_{R-1}^{(j)}
&= \sum_{e=0}^{E-2} \eta_{R-1,e}\,\beta_{R-1,e}^{(j)},
\end{align}
and
\begin{align}
\label{equ25}
\beta_{R-1,e}^{(j)}=& \frac{\hat{N}^{(f_j)}}{\sum_{u=1}^{L}\hat{N}^{(f_u)}}
   \sum_{k \in \hat{\mathcal{K}}^{(f_j)}} 
   \frac{n^{(k)}}{\hat{N}^{(f_j)}}
   \left(\sum_{i=1}^C \bigl|P_{y=i}^{(k)} - P_{y=i}^{(c_j)}\bigr|\right) \notag\\
   &\cdot \left(\prod_{d=e+1}^{E-1} a_{R-1,d}^{(k)}\right)
   g_{\max}\!\bigl(\boldsymbol{w}_{R-1,e}^{(c_j)}\bigr).
\end{align}

To bound \(\left\|\bm{w}_{R}^{(f)} - \bm{w}_{R}^{(c)} \right\|\), we need to give the bound of \( \left\|\boldsymbol{w}_{R-1, 0}^{(f_j)}-\boldsymbol{w}_{R-1, 0}^{(c)}\right\| \) for \(j = 1,\ldots,L\). Based on (13)-(15) and (18), we have:

\begin{align} 
\label{equ26}
&\left\| \boldsymbol{w}_{R-1, 0}^{(f_1)} - \boldsymbol{w}_{R-1, 0}^{(c)} \right\| \notag \\
= &\bigg\| \frac{\hat{N}^{(f_1)} \boldsymbol{\hat{w}}_{R-2, E}^{(f_1)} + \hat{N}^{(f_2)} \boldsymbol{\hat{w}}_{R-2, E}^{(f_2)}}{\hat{N}^{(f_1)} + \hat{N}^{(f_2)}} - \frac{\sum_{j=1}^{3} \hat{N}^{(f_j)} \boldsymbol{w}_{R-2, E}^{(c_j)}}{\sum_{j=1}^{3} \hat{N}^{(f_j)}}\bigg\| \notag \\
=& \bigg\| 
\sum_{j=1}^{3} \frac{\hat N^{(f_j)}\!\left(\hat{\boldsymbol w}_{R-2,E}^{(f_j)}-\boldsymbol w_{R-2,E}^{(c_j)}\right)}{\sum_{u=1}^{3}\hat N^{(f_u)}}
\;+\; \sum_{j=1}^{3} \rho_{R-2}^{(j)}\,\hat{\boldsymbol w}_{R-2,E}^{(f_j)}
\bigg\|,
\end{align}
where 
\begin{align}
\label{equ27}
\rho_{R-2}^{(j)} =
\begin{cases}
\dfrac{\hat N^{(f_j)}}{\hat N^{(f_1)}+\hat N^{(f_2)}}-
\dfrac{\hat N^{(f_j)}}{\sum_{u=1}^{3}\hat N^{(f_u)}}, & j=1,2,\\[6pt]
-\dfrac{\hat N^{(f_3)}}{\sum_{u=1}^{3}\hat N^{(f_u)}}, & j=3,
\end{cases}
\end{align}
and we have \(\sum_{j=1}^{3}\rho_{r}^{(j)} = 0\), \(\forall r\). By using the same way, we can obtain the bound of other edge models as follows:
\begin{align} 
\label{equ28}
&\bigg\| \boldsymbol{w}_{R-1, 0}^{(f_2)} - \boldsymbol{w}_{R-1, 0}^{(c)} \bigg\| \notag \\
\leq & \bigg\| \sum_{j=1}^{3} \frac{\hat{N}^{(f_j)} \left( \boldsymbol{\hat{w}}_{R-2, E}^{(f_j)} - \boldsymbol{w}_{R-2, E}^{(c_j)} \right) }{\sum_{u=1}^{3} \hat{N}^{(f_u)}}\bigg\|,
\end{align}

\begin{align} 
\label{equ29}
&\bigg\| \boldsymbol{w}_{R-1, 0}^{(f_3)} - \boldsymbol{w}_{R-1, 0}^{(c)} \bigg\| \notag\\
\leq & \bigg\| \sum_{j=1}^{3} \frac{\hat{N}^{(f_j)} \left( \boldsymbol{\hat{w}}_{R-2, E}^{(f_j)} - \boldsymbol{w}_{R-2, E}^{(c_j)} \right) }{\sum_{u=1}^{3} \hat{N}^{(f_u)}} + \sum_{j=1}^{3}\mu_{R-2}^{(j)}\boldsymbol{\hat{w}}_{R-2, E}^{(f_j)}\bigg\|,
\end{align}
where 
\begin{align}
\label{equ30}
\mu_{R-2}^{(j)} =
\begin{cases}
-\dfrac{\hat N^{(f_1)}}{\sum_{u=1}^{3}\hat N^{(f_u)}}, & j=1,\\[6pt]
\dfrac{\hat N^{(f_j)}}{\hat N^{(f_2)}+\hat N^{(f_3)}}-
\dfrac{\hat N^{(f_j)}}{\sum_{u=1}^{3}\hat N^{(f_u)}}, & j=2,3,
\end{cases}
\end{align}
and we have \(\sum_{j=1}^{3}\mu_{r}^{(j)}=0\), \(\forall r\). By substituting~\eqref{equ26}–\eqref{equ29} into~\eqref{equ22} we have:

\begin{align} 
\label{equ31}
&\left\|\bm{w}_{R}^{(f)} - \bm{w}_{R}^{(c)} \right\| \notag\\
\leq& D_{R-1} \bigg\| \sum_{j=1}^{3} \frac{\hat{N}^{(f_j)} \left( \boldsymbol{\hat{w}}_{R-2, E}^{(f_j)} - \boldsymbol{w}_{R-2, E}^{(c_j)} \right) }{\sum_{u=1}^{3} \hat{N}^{(f_u)}} \bigg\| + \epsilon_{R-1}^{\text{intra}} + \epsilon_{R-1}^{\text{inter}} \notag\\
\leq& \prod_{p=R-\kappa +1}^{R-1} D_{p} \left( \sum_{j=1}^{3} D_{R-\kappa}^{(j)}  \left\|\boldsymbol{w}_{R-\kappa, 0}^{(f)}-\boldsymbol{w}_{R-\kappa, 0}^{(c)}\right\| + \epsilon_{R-\kappa}^{\text{intra}}\right)+\notag \\
&\sum_{p=R-\kappa+1}^{R-1}\left(\prod_{q=p+1}^{R-1} D_q \right)\epsilon_{p}^{\text{intra}} + \sum_{p=R-\kappa+1}^{R-1}\left(\prod_{q=p+1}^{R-1} D_q\right)\epsilon_{p}^{\text{inter}},
\end{align}
where the first inequality holds since we denote \(D_{R-1} = \sum_{j=1}^{3} D_{R-1}^{(j)}\), 
\begin{align}
\label{equ32}
\epsilon_{R-1}^{\text{intra}} = \sum_{j=1}^{3} G_{R-1}^{(j)},
\end{align}
and
\begin{align}
\label{equ33}
\epsilon_{R-1}^{\text{inter}} =& D_{R-1}^{(1)} \bigg\| \sum_{j=1}^{3}\rho_{R-2}^{(j)} \boldsymbol{\hat{w}}_{R-2, E}^{(f_j)}\bigg\| \notag\\
&+D_{R-1}^{(3)} \bigg\| \sum_{j=1}^{3} \mu_{R-2}^{(j)}\boldsymbol{\hat{w}}_{R-2, E}^{(f_j)}\bigg\|.
\end{align}

We assume the initial model of FedOC and the centralized-SGD algorithm after cloud aggregation is the same, i.e., \(\left\|\boldsymbol{w}_{R-\kappa, 0}^{(f)}-\boldsymbol{w}_{R-\kappa, 0}^{(c)}\right\| = 0\) then we have

\begin{align}
\label{equ34}
\left\|\bm{w}_{R}^{(f)} - \bm{w}_{R}^{(c)} \right\|
\leq&\underbrace{\sum_{p=R-\kappa}^{R-1}\left(\prod_{q=p+1}^{R-1} D_q \right)\epsilon_{p}^{\text{intra}}}_{\epsilon^{\text{intra}}} \notag\\
&+ \underbrace{\sum_{p=R-\kappa+1}^{R-1}\left(\prod_{q=p+1}^{R-1} D_q\right)\epsilon_{p}^{\text{inter}}}_{\epsilon^{\text{inter}}},
\end{align}
which completes the proof.

\subsection{Proof of Theorem 1}

Based on the \textbf{Assumption 1}, we have
\begin{align} 
\label{equ35}
\ell(\bm{w}_{R}^{(f)}) - \ell(\bm{w}^{*}) &\leq \frac{\lambda}{2} \left\| \bm{w}_{R}^{(f)} - \bm{w}^{*} \right\| \notag\\
&\leq \frac{\lambda}{2} \underbrace{\left\| \bm{w}_{R}^{(f)} - \bm{w}_{R}^{(c)} \right\|}_{A_1} + \frac{\lambda}{2} \underbrace{\left\| \bm{w}_{R}^{(c)} - \bm{w}^{*} \right\|}_{A_2} 
\end{align}
where \(A_1\) has been bounded in \textbf{Lemma 1}, we need to further specify the bounds of \(\epsilon^{\text{intra}}\) and \(\epsilon^{\text{inter}}\) by combining the assumption of \(\eta_{r,e} = \frac{1}{r(E -1)}\).

In the following, we first derive a theoretical upper bound for \(\epsilon^{\text{intra}}\). We assume that there exist constants \(g_{\max}\) and \(\beta_{r}^{(j)}\) such that \(g_{\max}\big(\bm{w}_{r,e}^{(c_j)}\big) \le g_{\max}\) and \(\beta_{r,e}^{(j)} \le \beta_{r}^{(j)}\), \(\forall r,~e,~j\). Moreover, we denote \(\sum_{j=1}^{L} \beta_{r}^{(j)} = \beta_{r}\). Since \(a_{r,q}^{(k)}=1+\eta_{r,q} \sum_{i=1}^C P_{y=i}^{(k)} \lambda_{\boldsymbol{x} \mid y=i} >1,~\forall r,~q\), based on \eqref{equ25}, we further assume \(\beta_{\text{min}} \leq \beta_{r} \leq \beta_{\text{max}}\). By substituting \(\eta_{r,e} = \frac{1}{r(E-1)}\), \eqref{equ24} and \eqref{equ25} into \eqref{equ32}, we have

\begin{align} 
\label{equ36}
\epsilon_{r}^{\text{intra}} = \sum_{e=0}^{E2}\eta_{r,e}\sum_{j=1}^{L}\beta_{r,e}^{(j)}  
\leq\sum_{e=0}^{E-2} \eta_{r,e} \beta_{r} \leq   \frac{\beta_{r}}{r},
\end{align}

Substituting \eqref{equ36} into \(\epsilon^{\text{intra}}\) in \eqref{equ34}, it follows that
\begin{align} 
\label{equ37}
\epsilon^{\text{intra}} &= \sum_{p=R-\kappa}^{R-1} \left( \prod_{q=p+1}^{R-1} D_q \right) \epsilon_p^{\text{intra}} \notag\\
& \leq \frac{\beta_{R-1}}{R-1} + D_{R-1} \frac{\beta_{R-2}}{R-2} + \cdots +\prod_{p=R-\kappa+1}^{R-1} D_{p} \frac{\beta_{R-\kappa}}{R-\kappa} \notag \\
& \leq \frac{\kappa \beta_{\text{max}} \prod_{p =R-\kappa+1}^{R-1} D_p}{R-\kappa},
\end{align}
where the final inequality holds since \(D_r>1, \forall r\) with the definition of \eqref{equ21}.

Next, we derive a theoretical upper bound for \(\epsilon^{\text{inter}}\) in \eqref{equ34}. Based on \eqref{equ3} and \eqref{equ14}, we have

\begin{align}
\label{equ38}
\bigg\|\sum_{j=1}^{3} \rho_{R-2}^{(j)}\,\hat{\boldsymbol w}_{R-2,E}^{(f_j)} \bigg\|=&\bigg\| \sum_{j=1}^{3}\rho_{R-2}^{(j)} \frac{\sum_{k\in\hat{\mathcal{K}}^{(f_j)}}n^{(k)}}{\hat{N}^{(f_j)}}
\Big[\boldsymbol{w}_{R-2,0}^{(k)} \notag\\ &- \sum_{e=0}^{E-1} \eta_{R-2,e}\,\nabla \ell_k\!\big(\boldsymbol{w}^{(k)}_{R-2,e}\big)\Big] \bigg\|.
\end{align}

Since \(\boldsymbol{w}^{(k)}_{R-2,0} = \boldsymbol{w}^{(f_j)}_{R-2,0}, ~\text{for}~ k\in\hat{\mathcal{K}}^{(f_j)}\), substituting \(\eta_{r,e} = \frac{1}{r(E -1)}\) into \eqref{equ38}, we have 
\begin{align} 
\label{equ39}
&\bigg\| \sum_{j=1}^{3} \rho_{R-2}^{(j)}\,\hat{\boldsymbol w}_{R-2,E}^{(f_j)} \bigg\| \notag\\
=& \bigg\| \sum_{j=1}^{3}\rho_{R-2}^{(j)}\boldsymbol{w}_{R-2,0}^{(f_j)} \bigg\| + \frac{1}{(R-2)(E-1)} \notag\\ 
&\cdot  \bigg\|\sum_{j=1}^{3}\rho_{R-2}^{(j)}\sum_{k\in\hat{\mathcal{K}}^{(f_j)}}\frac{n^{(k)}}{\hat{N}^{(f_j)}} \sum_{e=0}^{E-1} \nabla \ell_k(\boldsymbol{w}^{(k)}_{R-2,e}) \bigg\|\notag\\
=& \bigg\| \rho_{R-2}^{(1)} \frac{\sum_{j=1}^{2} \hat{N}^{(f_j)} \boldsymbol{\hat{w}}_{R-3, E}^{(f_j)}}{\hat{N}^{(f_1)} + \hat{N}^{(f_2)}} + \rho_{R-2}^{(2)} \frac{\sum_{j=1}^{3}\hat{N}^{(f_j)} \boldsymbol{\hat{w}}_{R-3, E}^{(f_j)}  }{\sum_{u=1}^{3} \hat{N}^{(f_u)}} \notag\\
&+ \rho_{R-2}^{(3)} \frac{\sum_{j=2}^{3} \hat{N}^{(f_j)} \boldsymbol{\hat{w}}_{R-3, E}^{(f_j)}}{\hat{N}^{(f_2)} + \hat{N}^{(f_3)}}   \bigg\| + 
\frac{\left\|\sum_{j=1}^{3} \rho_{R-2}^{(j)}\bar{\Delta}_{R-2}^{(j)}\right\|}{(R-2)(E-1)}\notag\\
=& \bigg\| \sum_{j=1}^{3}\rho_{R-3}^{(j)}\hat{\boldsymbol{w}}_{R-3,E}^{(f_j)}\bigg\| + \frac{\left\|\sum_{j=1}^{3} \rho_{R-2}^{(j)}\bar{\Delta}_{R-2}^{(j)}\right\|}{(R-2)(E-1)} \notag\\
=&  \bigg\| \sum_{j=1}^{3}\rho_{R-\kappa}^{(j)}\hat{\boldsymbol{w}}_{R-\kappa,0}^{(f_j)}\bigg\| + \sum_{u = R-\kappa}^{R-2} \frac{\left\|\sum_{j=1}^{3} \rho_{u}^{(j)}\bar{\Delta}_{u}^{(j)}\right\|}{u(E-1)} \notag\\
= & \bigg\| (\sum_{j=1}^{3}\rho_{R-\kappa}^{(j)})\boldsymbol{w}_{R-\kappa}^{(f)}\bigg\| + \sum_{u = R-\kappa}^{R-2} \frac{ \left\| \sum_{j=1}^{3} \rho_{u}^{(j)}\bar{\Delta}_{u}^{(j)} \right\|}{u(E-1)} \notag\\
=& \sum_{u = R-\kappa}^{R-2} \frac{ \left\| \sum_{j=1}^{3} \rho_{u}^{(j)} \bar{\Delta}_{u}^{(j)} \right\|}{u(E-1)}
\end{align}
where \(\bar{\Delta}_{R-2}^{(j)}\) is the cumulative average local gradients of cell \(j\), which is denoted by \(\bar{\Delta}_{R-2}^{(j)}=\sum_{k\in\hat{\mathcal{K}}^{(f_j)}}\frac{n^{(k)}}{\hat{N}^{(f_j)}} \sum_{e=0}^{E-1} \nabla \ell_k(\boldsymbol{w}^{(k)}_{R-2,e})\), and for \(r\leq R-2\), we denote
\begin{align}
\label{equ40}
\begin{cases}
\rho_{r-1}^{(1)}
= \dfrac{\hat N^{(f_1)}}{\hat N^{(f_1)}+\hat N^{(f_2)}}\,\rho_{r}^{(1)}
+ \dfrac{\hat N^{(f_1)}}{\sum_{u=1}^{3}\hat N^{(f_u)}}\,\rho_{r}^{(2)},\\
\rho_{r-1}^{(2)}
= \dfrac{\hat N^{(f_2)}\rho_{r}^{(1)}}{\sum_{j=1}^{2}\hat N^{(f_j)}}
+ \dfrac{\hat N^{(f_2)}\rho_{r}^{(2)}}{\sum_{u=1}^{3}\hat N^{(f_u)}}
+ \dfrac{\hat N^{(f_2)}\rho_{r}^{(3)}}{\sum_{j=2}^{3}\hat N^{(f_j)}},\\
\rho_{r-1}^{(3)}
= \dfrac{\hat N^{(f_3)}}{\sum_{u=1}^{3}\hat N^{(f_u)}}\,\rho_{r}^{(2)}
+ \dfrac{\hat N^{(f_3)}}{\hat N^{(f_2)}+\hat N^{(f_3)}}\,\rho_{r}^{(3)}.\\
\end{cases}
\end{align}
It is easily proved that \(\sum_{j=1}^{3} \rho_{r}^{(j)} = 0\), \(\forall~r\). Similarly, we have 
\begin{align} 
\label{equ41}
\bigg\|\sum_{j=1}^{3} \mu_{R-2}^{(j)}\,\hat{\boldsymbol w}_{R-2,E}^{(f_j)} \bigg\|= \sum_{p = R-\kappa}^{R-2} \frac{\left\|\sum_{j=1}^{3} \mu_{p}^{(j)}\bar{\Delta}_{p}^{(j)}\right\|}{p(E-1)},
\end{align}
and 
\begin{align}
\label{equ42}
\begin{cases}
\mu_{r-1}^{(1)}
= \dfrac{\hat N^{(f_1)}}{\hat N^{(f_1)}+\hat N^{(f_2)}}\,\mu_{r}^{(1)}
+ \dfrac{\hat N^{(f_1)}}{\sum_{u=1}^{3}\hat N^{(f_u)}}\,\mu_{r}^{(2)},\\
\mu_{r-1}^{(2)}
= \dfrac{\hat N^{(f_2)}\mu_{r}^{(1)}}{\sum_{j=1}^{2}\hat N^{(f_j)}}
+ \dfrac{\hat N^{(f_2)}\mu_{r}^{(2)}}{\sum_{u=1}^{3}\hat N^{(f_u)}}
+ \dfrac{\hat N^{(f_2)}\mu_{r}^{(3)}}{\sum_{j=2}^{3}\hat N^{(f_j)}},\\
\mu_{r-1}^{(3)}
= \dfrac{\hat N^{(f_3)}}{\sum_{u=1}^{3}\hat N^{(f_u)}}\,\mu_{r}^{(2)}
+ \dfrac{\hat N^{(f_3)}}{\hat N^{(f_2)}+\hat N^{(f_3)}}\,\mu_{r}^{(3)},\\
\end{cases}
\end{align}
where \(\sum_{j=1}^{3} \mu_{r}^{(j)} = 0,~\forall~r\) . Substituting \eqref{equ39} and \eqref{equ41} into \eqref{equ33}, we have

\begin{align}
\label{equ43}
\epsilon_{R-1}^{\text{inter}} &= {D_{R-1}^{(1)}}\left(\sum_{p = R-\kappa}^{R-2} \frac{\left\|\sum_{j=1}^{3} \rho_{p}^{(j)}\bar{\Delta}_{p}^{(j)}\right\|}{p(E-1)}\right) \notag\\
&\quad+{D_{R-1}^{(3)}} \left( \sum_{p = R-\kappa}^{R-2} \frac{\left\|\sum_{j=1}^{3} \mu_{p}^{(j)}\bar{\Delta}_{p}^{(j)}\right\|}{p(E-1)}\right) \notag\\
&=\sum_{p= R-\kappa}^{R-2} \frac{\sum_{j=1}^{3}\left(D_{r-1}^{(1)} \left\| \rho_{p}^{(j)}\right\| + D_{r-1}^{(3)} \left\| \mu_{p}^{(j)}\right\|\right) \bar{\Delta}_{p}^{(j)}}{p(E-1)}.
\end{align}

Based on \textbf{Assumption 3}, we have \(\left\|\bar{\Delta}_{p}^{(j)}\right\| \leq {\Delta}_{\text{max}}\). Assume there exist constants \(D_{\max}\) such that \( \sum_{j=1}^{3}\left(D_{r-1}^{(1)} \left\| \rho_{p}^{(j)}\right\| + D_{r-1}^{(3)} \left\| \mu_{p}^{(j)}\right\|\right) \leq D_{\text{max}}\) for \( R-\kappa \leq p \leq R-2\). Then we have

\begin{align}
\label{equ44}
\epsilon_{R-1}^{\text{inter}} &\leq \sum_{p= R-\kappa}^{R-2} \frac{D_{\text{max}} {\Delta}_{\text{max}}}{p(E-1)} \leq \frac{\left(\kappa-1\right) D_{\text{max}} {\Delta}_{\text{max}}}{\left(R-\kappa\right) \left(E-1\right)}
\end{align}

Similarly, we could obtain 

\begin{align}
\label{equ45}
\epsilon_{r}^{\text{inter}} 
\leq \frac{\left(r-R+\kappa\right) D_{\text{max}} \bar{\Delta}_{\text{max}}}{\left(R-\kappa\right) \left(E-1\right)}.
\end{align}

By substitute \eqref{equ45} into the \(\epsilon^{\text{inter}}\) term in \eqref{equ34}, we have
\begin{align}
\label{equ46}
\epsilon^{\text{inter}} 
&\leq\frac{\kappa(\kappa-1)D_{\text{max}} \bar{\Delta}_{\text{max}} \prod_{i=R-\kappa+2}^{R-1} D_{i}}{2(R-\kappa)(E-1)}.
\end{align}

Finally, we will give the upper bound of \(A2\). \(A2\) is the model divergence between the cell-centralized SGD algorithm and \(\bm{w}^{*}\). If we see each cell as a client, the cell-centralized SGD is the standard Fedavg algorithm. Under the assumption that \(\bm{w}^{*}\) is the final model of global centralized SGD, where data in all clients are collected in the CS to train a global model, \cite{a34} gives the bound of \(A_2\) as 
\vspace{-0.1cm}
\begin{align}
\label{equ47}
A_2 &= \left\| \bm{w}_{R}^{(c)} - \bm{w}^{*} \right\| \notag\\
&\leq \frac{\sum_{j=1}^{3} \hat{N}^{(f_j)} H^{(j)}\left( \sum_{i=1}^{C} \left\| P_{y=i}^{(c_j)} - P_{y=i}^{(c)} \right\|\right) }{N(R-1)(E-1)}. 
\end{align}
where \(H^{(j)}= \sum_{e=0}^{E-1} \left(\prod_{d=e+1}^{E-1}a_{R-1,d}^{(c_j)} \right) g_{\text{max}}( \bm{w}_{R-1,e}^{(c)})\).

Substituting~\eqref{equ37},~\eqref{equ46} and~\eqref{equ47} into \eqref{equ35} completes the proof.
\vspace{-0.3cm} 
\bibliographystyle{IEEEtran}  
\bibliography{references}      












\end{document}